\tikzset{every tree node/.style={minimum width=2em},
         blank/.style={draw=none},
         edge from parent/.style=
         {draw, color=gray, edge from parent path={(\tikzparentnode) -- (\tikzchildnode)}},
         level distance=1.5cm}
\newcommand{\ie}{i.e., }
\newcommand{\eg}{e.g., }
\newcommand{\true}{1\xspace}
\newcommand{\false}{0\xspace}
\definecolor{reportcolor}{HTML}{188781}
\definecolor{papercolor}{HTML}{b37400}
\newcommand{\reportmode}{\toggletrue{REPORT}\togglefalse{DEBUG}}
\newcommand{\reportonly}[1]{\iftoggle{DEBUG}{{\color{reportcolor} #1}}{\iftoggle{REPORT}{#1}{}}}
\newcommand{\paperonly}[1]{\iftoggle{DEBUG}{{\color{papercolor} #1}}{\iftoggle{REPORT}{}{#1}}}
\newcommand{\debugonly}[1]{\iftoggle{DEBUG}{#1}}
\newcommand{\reportorpaper}[2]{\iftoggle{DEBUG}{{\color{reportcolor} #1} {\color{papercolor} #2}}{\iftoggle{REPORT}{#1}{#2}}}
\newcommand{\shapename}[1]{\emph{#1}}
\newcommand{\labelname}[1]{\emph{#1}}
\newcommand{\propname}[1]{\emph{#1}}
\newcommand{\stringvalue}[1]{\textrm{"#1"}}
\definecolor{darkgray}{gray}{0.20}
\definecolor{shaclextendcolor}{HTML}{004bb3}
\definecolor{shacltranslatecolor}{gray}{0.0}
\definecolor{rdfstarextendcolor}{HTML}{00b368}
\definecolor{lightgray}{gray}{0.40}
\definecolor{meta1color}{HTML}{0066cc}
\definecolor{meta2color}{HTML}{cc0066}
\definecolor{typecolor}{gray}{0.55}
\newcommand{\hsp}{\vphantom{Ag}}
\newlength\vgap
\newlength\hgap
\newlength\kvgap
\newcommand\nodelabel[2]{{\color{black}\textbf{#1}} \labelname{#2}}
\newcommand\edgelabel[2]{{\color{black}\textbf{#1}} \labelname{#2}}
\newcommand\wikilabel[2]{{\color{lightgray}\emph{#1}}\\\labelname{#2}}
\newcommand\propertylabelstring[2]{\propname{#1}: \{\stringvalue{#2}\}}
\newcommand\propertylabelstrings[3]{\propname{#1}: \{\stringvalue{#2},\stringvalue{#3}\}}
\newcommand\propertylabelint[2]{\propname{#1}: \{#2\}}
\newcommand\propertylabeldate[2]{\propname{#1}: \{#2\}}
\newcommand\shaclextend[1]{{\color{shaclextendcolor}#1}}
\newcommand\shacltranslate[1]{{\color{shacltranslatecolor}#1}}
\tikzset{%
  compact/.style={
	  inner sep=0.4ex,
	  font=\sffamily\scriptsize\hsp
	},
  iri/.style={
    draw=darkgray,
    text=darkgray,
    fill=white,
    rectangle,
    align=center,
    rounded corners,
    thick,
    text centered,
    minimum width=0.5cm,
    font=\sffamily\small\hsp
	},
  irim1/.style={iri,draw=typecolor},
  irim2/.style={iri},
  irit/.style={iri,text=typecolor,draw=typecolor},
  irii/.style={iri,dashed,text=typecolor,draw=typecolor},
  literal/.style={
    draw=lightgray,
    text=lightgray,
    fill=white,
    align=left,
    rectangle,
    thick,
    text centered,
    font=\sffamily\small\hsp
	},
  literalm1/.style={literal},
  literalm2/.style={literal},
  label/.style={
     text centered,
     anchor=center,
     fill=white,
     opacity=0.98,
     text opacity=1,
     inner sep=0.2ex,
     font=\sffamily\footnotesize\hsp
	},
  arrin/.style={
    draw=darkgray,
    text=darkgray,
    arrows={stealth-},
    thick,
    font=\sffamily\footnotesize\hsp,
    pos=0.55
	},
  arrinm1/.style={arrin,dashed,text=typecolor,draw=rdfstarextendcolor},
  arrinm2/.style={arrin,dashdotted,text=meta2color,draw=meta2color},
  arrint/.style={arrin,text=typecolor,draw=typecolor},
  arrini/.style={arrin,dashed,text=typecolor,draw=typecolor},
  arrout/.style={
    draw=darkgray,
    arrows={-stealth},
    thick,
    font=\sffamily\footnotesize\hsp,
	  pos=0.45
	},
  arroutm1/.style={arrout,dashed,draw=meta1color},
  arroutm2/.style={arrout,dashdotted,draw=meta2color},
  arrouti/.style={arrout,dashed,text=typecolor,draw=typecolor},
}
\newcommand\redsout{\bgroup\markoverwith{\textcolor{meta2color}{\rule[0.5ex]{2pt}{1.0pt}}}\ULon}
\newcommand{\tikznode}[2]{\begin{tikzpicture}[baseline=-3pt]\node[#1,inner sep=0.4ex]{\scriptsize #2};\end{tikzpicture}}
\NewDocumentCommand{\gnode}{ O{iri} m}{\tikznode{#1}{#2}}
\newlength\wordWidth
\NewDocumentCommand{\tikzedgenonodes}{O{white} O{arrin} m }{%
  \setlength{\wordWidth}{\widthof{{\scriptsize #3}} + 0.6cm}%
  \begin{tikzpicture}[baseline=-3pt]%
    \node[inner sep=0](A){};%
    \node[inner sep=0,right=\wordWidth of A]{} edge[#2] node[label,fill=#1,xshift=0pt] {\scriptsize #3} (A);%
  \end{tikzpicture}%
}
\NewDocumentCommand{\tikzedgevarwidth}{ O{arrin} O{iri} O{iri} m m m }{%
  \setlength{\wordWidth}{\widthof{{\scriptsize #5}}+0.5cm}%
  \begin{tikzpicture}[baseline=-3pt]\node[#2,compact](A){#4};%
    \node[#3,compact,right=\wordWidth of A]{#6} edge[#1] node[label,xshift=-1pt] {#5} (A);%
  \end{tikzpicture}%
}
\NewDocumentCommand{\gedgenonodes}{ O{white} O{arrin} m }{\tikzedgenonodes[#1][#2]{#3}}
\NewDocumentCommand{\gedge}{ O{iri} O{iri} m m m}{\tikzedgevarwidth[arrin][#1][#2]{#3}{#4}{#5}}
\pgfplotsset{compat=1.16}
\newcommand{\therepo}{\url{https://github.com/softlang/progs}}
\newcommand{\pshacl}{ProGS\xspace}
\newcommand{\validation}{\textrm{VALID}\xspace}
\newcommand{\valid}[3]{\normalfont\textrm{VALID}(#1,#2,#3)\xspace}
\newcommand{\shapeassignment}{\ensuremath\Sigma\xspace}
\NewDocumentCommand{\iversonleft}{}%
{\ensuremath [\,}
\NewDocumentCommand{\iversonright}{}%
{\ensuremath \,]}
\NewDocumentCommand{\iverson}{m}%
{\ensuremath \iversonleft #1 \iversonright}
\NewDocumentCommand{\iversonerr}{O{(\phi,x)} m}%
{\ensuremath [\,#2\,]_{#1}}
\newcommand{\names}[1]{%
  \ifthenelse{\isempty{#1}}%
  {\ensuremath\mathrm{Names}\xspace}%
  {\ensuremath\mathrm{Names}(#1)}%
}%
\NewDocumentCommand{\evalpath}{O{\shapeassignment} O{n} O{G} m}%
{\ensuremath\llbracket #4 \rrbracket^{#1, #2, #3}}%
\NewDocumentCommand{\evalany}{O{\shapeassignment} O{x} O{G} m}%
{\ensuremath\llbracket #4 \rrbracket^{#1, #2, #3}}%
\NewDocumentCommand{\evaln}{O{\shapeassignment} O{n} O{G} m}%
{\ensuremath\llbracket #4 \rrbracket^{#1, #2, #3}}%
\NewDocumentCommand{\evalsn}{O{G} O{S} m}%
{\ensuremath\llbracket #3 \rrbracket^{#1, #2}}%
\NewDocumentCommand{\evalse}{O{G} O{S} m}%
{\ensuremath\llbracket #3 \rrbracket^{#1, #2}}%
\NewDocumentCommand{\evale}{O{\shapeassignment} O{e} O{G} m}%
{\ensuremath\llbracket #4 \rrbracket^{#1, #2, #3}}%
\NewDocumentCommand{\evalp}{O{\shapeassignment} O{(x,k)} O{G} m}%
{\ensuremath\llbracket #4 \rrbracket^{#1, #2, #3}}%
\NewDocumentCommand{\evalv}{O{\shapeassignment} O{v} O{G} m}%
{\ensuremath\llbracket #4 \rrbracket^{#1, #2, #3}}%
\NewDocumentCommand{\evalq}{O{G} m}%
{\ensuremath\llbracket #2 \rrbracket_{#1}}%
\NewDocumentCommand{\evali}{m}%
{\ensuremath\llbracket #1 \rrbracket_{\textrm{init}}}%
\NewDocumentCommand{\nodeshape}{O{s_N} O{\phi_N} O{q_N}}%
{\ensuremath _N \langle #1,#2,#3 \rangle}
\NewDocumentCommand{\nodeshapel}{O{s_N} O{\phi_N} O{q_N}}%
{\begin{align*}
_N \langle &#1,\\
           &#2,\\
           &#3 \rangle
\end{align*}}
\newcommand{\geqleft}[2]{\geqslant_{#1}^{\leftarrow} #2}
\newcommand{\geqright}[2]{\geqslant_{#1}^{\rightarrow} #2}
\newcommand{\leqleft}[2]{\leqslant_{#1}^{\leftarrow} #2}
\newcommand{\eqleft}[2]{=_{#1}^{\leftarrow} #2}
\newcommand{\existsleft}[1]{\exists^{\leftarrow} #1}
\newcommand{\forallleft}[1]{\forall^{\leftarrow} #1}
\newcommand{\examplegraph}{G_{\textrm{office}}}
\newcommand{\targetvk}[2]{#2_{#1}}
\newcommand\utimes{\mathbin{\ooalign{$\cup$\cr%
   \hfil\raise0.42ex\hbox{$\scriptscriptstyle\times$}\hfil\cr}}}
\newcommand\bigutimes{\mathop{\ooalign{$\bigcup$\cr%
   \hfil\raise0.36ex\hbox{$\scriptscriptstyle\boldsymbol{\times}$}\hfil\cr}}}
\NewDocumentCommand{\edgeshape}{O{s_E} O{\phi_E} O{q_E}}%
{\ensuremath _E \langle #1,#2,#3 \rangle}
\NewDocumentCommand{\propertyshape}{O{s_P} O{\phi_P} O{q_P}}%
{\ensuremath _P \langle #1,#2,#3 \rangle \ }
\NewDocumentCommand{\csrule}{m m m}%
{$\rightarrow_{#1}$ & #2 & #3\\}
\lstdefinestyle{progs}{
  showstringspaces=false,
  basicstyle=\footnotesize\ttfamily,
  keywordstyle=\bfseries\color{black},
  commentstyle=\itshape\color{white!40!black},
  stringstyle=\color{white!40!black},
  morecomment=[l]{//},
  morestring=[b]",
  morekeywords={NODE,EDGE},
}
\lstdefinestyle{asp}{
  showstringspaces=false,
  basicstyle=\footnotesize\ttfamily,
  keywordstyle=\bfseries\color{black},
  commentstyle=\itshape\color{white!40!black},
  stringstyle=\color{white!40!black},
  morecomment=[l]{//},
  morestring=[b]",
  morekeywords={edge,label,property,constraint,path,nodeshape,greaterEq,assignN,assignE,satisfiesN,satisfiesE,edgeshape,not,targetN,targetE,node,min},
}
\newcommand{\mref}[2]{\hyperref[#2]{#1~\ref*{#2}}}
\definecolor{ultralightgray}{HTML}{eeeeee}
\begin{document}

\pdfstringdefDisableCommands{%
  \def\\{}%
  \def\texttt#1{}%
}

\newcommand{\progs}{\pshacl: Property Graph Shapes Language}
  
\reportorpaper{
    \title{\progs{}\\(Extended Version)}
    \titlerunning{\progs}
}{
    \title{\progs}
}

\debugonly{
    \title{\progs{}\\\texttt{({\color{reportcolor} REPORT} {\color{papercolor} PAPER})}}
    \titlerunning{\progs}
    \authorrunning{Philipp Seifer, Ralf Lämmel, Steffen Staab}
}

\author{
  \reportorpaper{
    Philipp Seifer\inst{1} \and
    Ralf Lämmel\inst{1} \and
    Steffen Staab\inst{2,3}
  }{
    Philipp Seifer\inst{1}\orcidID{0000-0002-7421-2060} \and
    Ralf Lämmel\inst{1}\orcidID{0000-0001-9946-4363} \and
    Steffen Staab\inst{2,3}\orcidID{0000-0002-0780-4154}
  }
}
\institute{
  The Software Languages Team, University of Koblenz-Landau, Germany
  \email{\{pseifer,laemmel\}@uni-koblenz.de} \and
  Institute for Parallel and Distributed Systems, University of Stuttgart,
  Germany
  \email{steffen.staab@ipvs.uni-stuttgart.de}\and
  Web and Internet Science Research Group, University of Southampton, England
}

\maketitle

\begin{abstract}
Property graphs constitute data models for representing knowledge graphs.
They allow for the convenient representation of facts, including facts about facts, represented by triples in subject or object position of other triples.
Knowledge graphs such as Wikidata are created by a diversity of contributors and a range of sources leaving them prone to two types of errors.
The first type of error, falsity of facts, is addressed by property graphs through the representation of provenance and validity, making triples occur as first-order objects in subject position of metadata triples.
The second type of error, violation of domain constraints, has not been addressed with regard to property graphs so far.
In RDF representations, this error can be addressed by shape languages such as SHACL or ShEx, which allow for checking whether graphs are valid with respect to a set of domain constraints.
Borrowing ideas from the syntax and semantics definitions of SHACL, we design a shape language for property graphs, ProGS, which allows for formulating shape constraints on property graphs including their specific constructs, such as edges with identities and key-value annotations to both nodes and edges.
We define a formal semantics of ProGS, investigate the resulting complexity of validating property graphs against sets of ProGS shapes, compare with corresponding results for SHACL, and implement a prototypical validator that utilizes answer set programming.

\keywords{Property Graphs \and Graph Validation \and SHACL}
\end{abstract}


\section{Introduction}
\label{sec:introduction}

%
Knowledge graphs such as Wikidata~\cite{DBLP:journals/cacm/VrandecicK14} require a data model that allows for the representation of data annotations.
While property graphs serve well as data models for representing such knowledge graphs, they lack sufficient means for validation against domain constraints, for instance required provenance annotations.
%
%
The shapes constraint language SHACL~\cite{shacl} was introduced to allow for validating knowledge graphs that use the RDF data model~\cite{rdf}.
Wikidata and other knowledge graphs, however, make use of triples in subject position to represent provenance metadata, such as references or dates, going beyond the capabilities of the RDF framework.
Similar to extensions of RDF, such as RDF*~\cite{DBLP:conf/semweb/Hartig17} or~aRDF~\cite{DBLP:journals/tocl/UdreaRS10}, property graphs are a promising data model for meeting the modelling needs of annotated knowledge graphs. 
Recent property-graph data model (and query language) proposals include G-CORE~\cite{DBLP:conf/sigmod/AnglesABBFGLPPS18} and the upcoming GQL standard~\cite{gql}, as well as the recently established openCypher standard~\cite{cypher}.
They have attracted a lot of research interest and popularity in practical use-cases~\cite{DBLP:conf/sle/SeiferHLLS19}.
%

Property-graph models differ from RDF in substantial ways, featuring edges with identities (allowing multiple edges between nodes with the same sets of labels) and property annotations (that is key-value annotations) on edges.
A schema or shape-based validation language must account for these differences.
%
%
While there exist efforts to formally define property graph schema languages~\cite{DBLP:conf/grades/HartigH19,DBLP:journals/access/AnglesTT20}, and some practical implementations support simple schemata~\cite{neoschema} (\eg uniqueness constraints) or even enable SHACL validation for RDF compatible subsets of the data graph~\cite{neosemantics}, they do not allow for expressing shape constraints involving 
all elements of property graphs. 
In particular, existing approaches lack support for qualified number restrictions over edge identities, path expressions or the targeted validation of edges.
%
%

Consider the example graph $\examplegraph$ depicting employment relationships in \Cref{fig:intro}.
Some of the nodes and edges have property annotations. 
The edge with identity $200$, for example, has the annotation $\labelname{since}$ with values $\{01/01/1970\}$.
One may wish to define \emph{shapes} to require that all edges labelled \labelname{worksFor} have such metadata annotations.
Shapes that constrain \labelname{Employee} or \labelname{Company} and their interrelationships will lead to recursive descriptions and thus require a corresponding semantics.
Like \cite{DBLP:conf/semweb/CormanRS18}, we adopt a model-based formal semantics based on the notion of (partial) assignments that map  nodes and edges to sets of shape names and constitute the basis for a three-valued evaluation function.

\begin{figure}[t]
    \center
    \setlength{\vgap}{1cm}
\setlength{\hgap}{1.5cm}
\setlength{\kvgap}{0.2cm}

\begin{tikzpicture}

\node[iri] (tim) {\nodelabel{100}{Person Employee}};

\node[literal, below=1\kvgap of tim] (timdata) {
\propertylabelstring{name}{Tim Canterbury}\\
\propertylabelint{age}{30}};

\draw[irim1] (tim) -- (timdata);

\node[iri, right=3\vgap of tim] (wh) {\nodelabel{101}{Company}}
edge[arrin] node[label](wf1){\edgelabel{200}{worksFor}} (tim);

\node[literal, below=1\kvgap of wh] (whdata) {
\propertylabelstring{name}{Wernham Hogg}};

\draw[irim1] (wh) -- (whdata);

\node[literal, above=1\kvgap of wf1, align=left] (wf1data) { \propertylabeldate{since}{01/01/1970}};

\draw[irim1] (wf1) -- (wf1data);

\node[iri, above=1\hgap of tim] (gareth) {\nodelabel{102}{Employee}}
edge[arrin, bend left=20] node[label](gareth1){\edgelabel{201}{colleagueOf}} (tim);

\node[literal, above=1\kvgap of gareth] (garethdata) {
\propertylabelstring{name}{Gareth Keenan}\\
\propertylabelstrings{role}{sales}{team leader}};

\draw[irim1] (gareth) -- (garethdata);

\draw [-] (tim.west) edge[arrin, bend left=60] node[label]{\edgelabel{202}{colleagueOf}} (gareth.west);

\draw [-] (wh) edge[arrin, bend right=20] node[label](wf2){\edgelabel{203}{worksFor}} (gareth.east);

\node[literal, above right=1\kvgap of wf2, align=left] (wf2data) {\propertylabeldate{since}{02/08/2020}};

\draw[irim1] (wf2) -- (wf2data);

\end{tikzpicture}
    \caption{Example property graph $\examplegraph$ showing employment relationships in G-CORE style: Nodes are depicted as rounded boxes.
    Each node has exactly one identifier, \eg $100$ or $101$, and it has zero or more labels, \eg $\{$\labelname{Person}, \labelname{Employee}$\}$ or $\{$\labelname{Company}$\}$.
    Each edge has an identifier, \eg $200$, as well as zero or more labels, \eg $\{$\labelname{worksFor}$\}$.
    Both nodes and edges may have a set of affiliated properties (key-value pairs shown in rectangular boxes), \eg $\{\propertylabelint{age}{30}\}$ or $\{\propertylabeldate{since}{01/01/1970}\}$.}
    \label{fig:intro}
\end{figure}
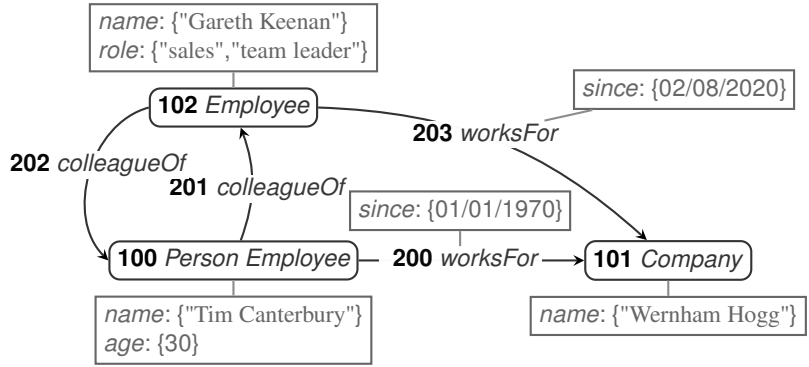
\paragraph{Contributions}
We present \pshacl, a shape language for property graphs that allows for formulating domain constraints and that significantly extends SHACL to property graph data models.
\pshacl comprises property-graph specific features, including shapes for edges with identities, qualified number restrictions over such edges and constraints on  properties and their values.
We define the formal semantics for validating graphs with \pshacl shapes, including cyclic, recursive shape references, based on the notion of partial faithful assignments inspired from~\cite{DBLP:conf/semweb/CormanRS18}.
We analyse the complexity of validating property graphs against sets of \pshacl constraints.
We show that \pshacl validation is NP complete, thus remaining in the same complexity class as SHACL while increasing expressiveness.
We provide a prototypical reference implementation relying on answer set programming, available on GitHub.
\paragraph{Outline}
The remainder of this paper is structured as follows.
\Cref{sec:preliminaries} gives a short overview of property graph models.
In \Cref{sec:pgshapes} we define the abstract syntax and semantics of \pshacl, including assignment-based validation of graphs against a set of \pshacl shapes.
\Cref{sec:complexity} analyses the complexity of the \pshacl graph validation problem.
\Cref{sec:implementation} investigates implementation approaches for \pshacl and introduces a prototypical implementation relying on an encoding of the validation problem as an answer set program.
\Cref{sec:relatedwork} discusses related work and \Cref{sec:summary} concludes the paper.
\section{Foundations}
\label{sec:preliminaries}

Before providing a working definition of property graphs as the basis of \pshacl, we compare existing property graph models to determine essential features.
To this end, consider \Cref{tab:compare}.
We compare the property graph models underlying the graph query languages G-CORE~\cite{DBLP:conf/sigmod/AnglesABBFGLPPS18}, Cypher~\cite{DBLP:conf/sigmod/FrancisGGLLMPRS18}, Gremlin~\cite{gremlin}, and PGQL~\cite{pgql}; we also include the RDF~\cite{rdf} data model and RDF*~\cite{DBLP:conf/semweb/Hartig17} as a point of reference.

We use the example depicted in \Cref{fig:wikidata}, an excerpt from Wikidata, to illustrate the differences between property graphs, RDF and RDF*.
The defining feature of property graphs are properties, that is key-value pairs, on edges and nodes.
For example, $\labelname{point in time}$ in \Cref{fig:wikidata} could be represented as such a property annotation for the edge labelled $\labelname{nominated for}$.
Property keys are string literals, while value domains vary between approaches, ranging from simple scalar values and strings to lists or maps of values.
The key differences to RDF arise from the fact that edges in property graphs have identities.
The edge $\labelname{nominated for}$, for example, would have a unique identity acting as a target for property annotations.
While this is not possible in plain RDF, node properties can be simulated through edges to literal nodes.
RDF* extends RDF by introducing triples that are first-order (FO) objects, meaning they can occur in both subject and object position of other triples.
This importantly subsumes edge properties, again through an encoding of literal nodes.
While not using RDF*, Wikidata also allows for annotations on edges referencing other resources.
This highlights the key difference between FO triples and property annotations:
While $\labelname{point in time}$ could be represented as a property annotation on the $\labelname{nominated for}$ edge, $\labelname{for work}$ could not.

There are some further differences between the various property graph models.
Support for labels differs between approaches, ranging from sets of labels on both nodes and edges (G-CORE, PGQL) to no support for node labels in Gremlin and single edge types in both Gremlin and Cypher.
Finally, only G-CORE features paths as FO objects, \ie paths that can be annotated with property annotations and labels.

\begin{table}
\caption{Comparison of feature support for common property graph models and RDF.}
\begin{center}
\begin{tabular}{ l c c c c c c }
\toprule
 & G-CORE & Cypher & Gremlin & PGQL & RDF* & RDF \\
 \midrule
 Nodes as FO objects & + & + & + & + & + & +\\
 Node properties & + & + & + & + & literals & literals \\  
 Node labels & set & set & none & set & set (rdf:type)  & set (rdf:type) \\  
 Triples/Edges as FO objects & + & + & + & + & + & -\\
 Edge properties & + & + & + & + & + & - \\  
 Edge labels  & set & single & single & set & single & single \\  
 Paths as FO objects & + & - & - & - & - & -\\
 Path properties & + & - & - & - & - & - \\  
 Path labels  & set & - & - & - & - & - \\
 \bottomrule
\end{tabular}
\end{center}
\label{tab:compare}
\end{table}

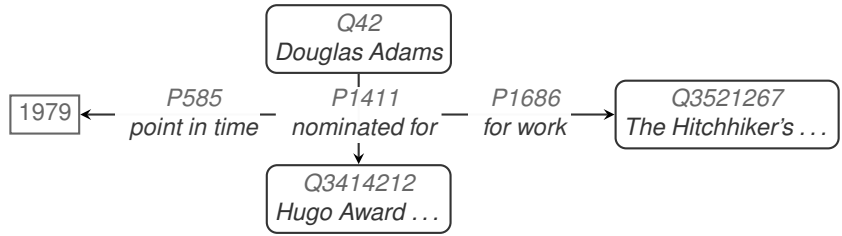
\begin{figure}[t]
    \center
    \setlength{\vgap}{1.2cm}
\setlength{\hgap}{1cm}
\setlength{\kvgap}{0.2cm}

\begin{tikzpicture}

\node[iri] (q42) {\wikilabel{Q42}{Douglas Adams}};

\node[iri, below=1\vgap of q42,align=center] (q3414212) {\wikilabel{Q3414212}{Hugo Award \ldots}} edge[arrin] node[label,align=center](p1411){\wikilabel{P1411}{nominated for}} (q42);

\node[literal,left=2.6\hgap of p1411,align=center] {1979} edge[arrin] node[label,align=center](p585){\wikilabel{P585}{point in time}} (p1411);

\node[iri,right=2.3\hgap of p1411,align=center] {\wikilabel{Q3521267}{The Hitchhiker's \ldots}} edge[arrin] node[label,align=center](p1686){\wikilabel{P1686}{for work}} (p1411);

\end{tikzpicture}
    \caption{Excerpt from Wikidata.}
    \label{fig:wikidata}
\end{figure}
\subsection{Definition of Property Graphs}
The formalization of the property graph model we use as a basis for the definition of \pshacl is based on the data model presented for G-CORE~\cite{DBLP:conf/sigmod/AnglesABBFGLPPS18}.
We do not consider first-class paths, and instead restrict the model to the core subset shared with other property graph models as discussed in the previous section.
In terms of value domains in properties, we provide exemplary support for the types \texttt{string}, \texttt{int} and \texttt{date}, without loss of generality.

Let the set of labels $L=L_N \cup L_E$ where $L_N$ is an infinite set of node labels and $L_E$ an infinite set of edge labels. 
As a matter of convention, we use \labelname{CamelCase} for all $l_N \in L_N$ and \labelname{camelCase} for all $l_E \in L_E$.
Let $K$ be an infinite set of property names (or keys) and $V$ an infinite set of literal values from the union of sets in $T \in \{ \texttt{int}, \texttt{string}, \texttt{date}\}$.
We refer to elements of $T$ as the type of the respective value.
Let furthermore $\textrm{FSET}(X)$ denote all finite subsets of a set $X$.

\begin{definition}[Property Graph]
  A property graph is a tuple $G=(N,E,\rho,\lambda,\sigma)$, where
  $N$ denotes a set of node identifiers and $E$ a set of edge identifiers, with $N \cap E = \emptyset$,
  $\rho : E \rightarrow (N \times N)$ is a total function,
  $\lambda : (N \cup E) \rightarrow \textrm{FSET}(L)$ is a total function,
  $\sigma : (N \cup E)\times K\rightarrow \textrm{FSET}(V)$ is a total function for which a finite set of tuples $(x,k)\in(N\cup E) \times K$ exists such that $\sigma(x,k)\ne\emptyset$.
\end{definition}

\noindent
A property graph consists of a set of nodes $n\in N$ and edges $e\in E$, where $\rho$ maps elements of $E$ to pairs of nodes.
The function $\lambda$ maps nodes and edges to all assigned labels $l\in L$ and likewise the function $\sigma$ maps pairs of nodes or edges, and property names to the property values assigned to them.
The example in \Cref{fig:ppg1} shows the property graph visualized in \Cref{fig:intro} using the formal definition.
Note, that we omit finitely many elements of the domain of $\lambda$ that are mapped to $\emptyset$ (none in this particular example) and infinitely many elements of the domain of $\sigma$ that are mapped to $\emptyset$.
\begin{figure}[ht]
    \centering
    \input{2/figures/example-graph-formal}
    \caption{Formal model for the example property graph $\examplegraph$ rendered in \Cref{fig:intro}.}
    \label{fig:ppg1}
\end{figure}
\section{Shapes for Property Graphs}
\label{sec:pgshapes}

Our shape language for property graph validation, called \pshacl, has been inspired by SHACL~\cite{shacl}, the W3C recommendation for writing and evaluating RDF graph validation constraints.
More specifically, we base the \pshacl shape language on the abstract syntax proposed by \cite{DBLP:conf/semweb/CormanRS18}, which formalizes a syntactic core of SHACL.
Corman et al.\ \cite{DBLP:conf/semweb/CormanRS18} also defined a formal semantics for this syntactic core that addresses recursion, in particular.
We facilitate the understanding of differences between SHACL and \pshacl by colour coding. Expressions that we borrow from SHACL will be displayed in  \shacltranslate{black font}, while novel expressions will be  coded
in \shaclextend{blue font}.

\subsection{Requirements on a Property Graph Shapes Language}
Requirements for our target language stem from the differences between the RDF and property graph data models, which we mentioned in \Cref{sec:preliminaries}.
\Cref{tab:mapping} explains how RDF may be mapped to the G-CORE property graph model.
Based on this mapping we design \pshacl to adopt language constructs from SHACL. The reader may note that this mapping includes some design decisions that are not unique, \eg we interpret class instantiations as corresponding to G-CORE labellings of nodes.
We follow a simplification of the third mapping $\mathcal{I}\mathcal{M}_3$ discussed in~\cite{DBLP:journals/access/AnglesTT20}, \eg by excluding blank nodes.

\begin{table}[t]
    \caption{Sketching correspondences between the RDF and G-CORE graph models.}
    \centering
    \begin{tabular}{p{5cm}p{3cm}l}\toprule
    Description & RDF  & G-CORE\,/\,\pshacl  \\ \midrule

      Node id  $i$  & IRI  $i$ & $i\in N$ \\
      Node $n$ has label $l$ & $n$ rdf:type $l$.  & $l\in \lambda(n)$\\
      Node $n$ has key $k$ with value $v$ &  $n\, k\, v.$ & $v\in \sigma(n,k)$\\
      Edge id $i$   & not available  & $i \in E$ \\
      Edge label $l$, in triple $s\, p\, o.$ & $s\, l\, o$ & $l\in \lambda(p)$\\
      Edge $e$ has key $k$ with value $v$ & not available & $v\in\sigma(e,k)$\\ 
            Triple $s\, p\, o.$ & $s\, p\, o.$ & $p\in \lambda(i), \rho(i)=(s,o)$\\
      \bottomrule
    \end{tabular}
    \vspace{0.3cm}
    \label{tab:mapping}
\end{table}

Edges in property graphs have identities, necessitating two distinct kinds of shapes for nodes (R1) and for edges (R2), as well as two kinds of qualified number restrictions for nodes, counting edges (R3) and counting reachable nodes via some path (R4).
Property annotations require dedicated constraints dealing with the set of values reachable via a specific key, for both nodes (R5) and edges (R6).
The presence of properties must also be considered for constraints that include comparison operations (R7).
Lastly, the existence of certain properties, or properties with certain values, also require new means of targeting nodes and edges in target queries (R8).
\subsection{Definition of Shapes}

Intuitively, a shape defines constraints on how certain nodes or edges in a graph are formed.
As both nodes and edges in property graphs have identities, we define \emph{node shapes} that apply to nodes and \emph{edge shapes} that apply to edges.
Each shape is a triple consisting of a shape name, a constraint, and a target query defining which nodes or which edges of a graph must conform to the shape, \ie fulfil all of its constraints, for the graph to be considered in conformance with the shape.
\begin{example}
  The node shape $\nodeshape[\shapename{PersonShape}][\labelname{Employee}][\labelname{Person}]$ is a triple with the shape name $\shapename{PersonShape}$, the constraint $\labelname{Employee}$, which requires that each graph node assigned this shape has the label $\labelname{Employee}$, and the target query $\labelname{Person}$, meaning all nodes with the label $\labelname{Person}$ are targets of this shape.
  For the graph $\examplegraph$ in Figure~\ref{fig:intro}, node $100$ conforms this shape, whereas node $102$ does not, lacking the $\labelname{Person}$ label. Given that at least one target node does not conform to the constraint, the entire graph does not conform to $\shapename{PersonShape}$.
\end{example}
As show in the first example, we use $\nodeshape$ to indicate triples that are node shapes and use $\edgeshape$ to refer to triples that are edge shapes.

Before introducing their components, we define path expressions $p$ in \cref{eq:qn} in analogy to property path expressions defined in SHACL~\cite{shacl}, which are in turn based on path expressions in the SPARQL query language.
A path expression, when evaluated on a starting node, describes the set of nodes reachable from this node via paths that match the path expression.
\begin{equation}
  \shacltranslate{
    p ::= \textrm{ } 
    l_E \mid p^- \mid p_1/p_2 \mid p_1 || p_2 \mid p* \mid p+ \mid \,?p
  }
  \label{eq:qn}
\end{equation}
Path expressions may include edge labels $l_E$, inverse paths $p^-$, path sequences $p_1/p_2$, alternate paths $p_1 || p_2$ and zero or more ($p*$), one or more ($p+$) and zero or one ($?p$) expressions.
Note the minor difference to paths in RDF graphs, in that edges in property graphs may have multiple labels.
%
\begin{example}
  The path $\labelname{worksFor}/\labelname{worksFor}^-$ describes the set of all colleagues of a starting node $n$ (including $n$ itself), by first finding all employers of $n$ (\ie nodes reachable from $n$ via an edge with label $\labelname{worksFor}$) and then all employees of those employers (\ie nodes with incoming $\labelname{worksFor}$ edges).
  For the graph $\examplegraph$ in \Cref{fig:intro} and starting node $100$, the result of evaluating this path would be the same as evaluating $\labelname{colleagueOf}*$, namely the set $\{100,102\}$.
\end{example}

Let the set of shapes $S = S_N \cup S_E$ consist of node and edge shapes and the set of shape names be called Names($S$). 
A node shape is a tuple $\nodeshape$ consisting of a shape name $s_N \in \textrm{Names}(S_N)$, a node constraint $\phi_N$ and a query for target nodes $q_N$.
A query for target nodes is either $\bot$, meaning the query has no targets, an explicitly targeted node $n \in N$, all nodes with  label $l_N \in L_N$, all nodes with property $k \in K$ or possibly further constrained as $\targetvk{v}{k}$ by a concrete value $v \in V$.
The syntax of target node queries $q_N$ is summarized in \cref{eq:tn}.
We write $\evalq{q_N}$ for the evaluation of a target node query, which is defined in \Cref{fig:nodetargetqueries}.
\begin{figure}[t]
  \begin{align*}
    \evalq{\bot} &= \emptyset\\
    \evalq{n} &= \{n\}\\
    \evalq{l_N} &= \{ n \mid n \in N \land l_N \in \lambda(n) \}\\
    \evalq{k} &= \{ n \mid n \in N \land \sigma(n,k) \neq \emptyset\}\\
    \evalq{\targetvk{v}{k}} &= \{n \mid n \in N \land v \in \sigma(n,k) \}
  \end{align*}
  \caption{Evaluation of target node queries.}
  \label{fig:nodetargetqueries}
\end{figure}
\begin{equation}
  \label{eq:tn}
  \begin{split}
    q_N ::= \textrm{ } \shacltranslate{\bot} 
    \mid \shacltranslate{n}
    \mid \shacltranslate{l_N} 
    \mid k 
    \mid \targetvk{v}{k}
  \end{split}
\end{equation}
\begin{example}
  The target query $q_N = \labelname{Employee}$ targets all nodes that are labelled with the label $\labelname{Employee}$.
  The set of targets when evaluating $q_N$ on the example graph $\examplegraph$ in \Cref{fig:intro} is therefore $\evalq[\examplegraph]{q_N} = \{100,102\}$.
\end{example}

Node constraints $\phi_N$ essentially specify which outgoing or incoming edges, which labels, or which properties a targeted node must have.
Assuming $s_N \in S_N$, $n \in N$, $l_N\in L_N$, $k\in K$, $i\in\mathbb{N}$, comparison operations $\odot$ for sets or singleton sets (\eg $=$, $<$, $\subset$) and arbitrary value predicate functions $f : V \rightarrow \{0,1\}$ such as $\geq 0$, $\ne 19$, or type restrictions for a specific data type such as \texttt{int}, \texttt{string} or \texttt{date}, node constraints $\phi_N$ are defined as in \cref{eq:phin}.
\begin{equation}
  \label{eq:phin}
  \begin{split}
    \phi_N ::= 
    \textrm{ } \shacltranslate{\top}\ & 
    \mid \shacltranslate{s_N}
    \mid \shacltranslate{n}
    \mid \shacltranslate{l_N}
    \mid \shacltranslate{\neg\phi_N}
    \mid \shacltranslate{\phi_N^1 \wedge \phi_N^2}
    \mid\, \shacltranslate{\geqslant_i p.\phi_N}
    \mid\, \shacltranslate{\odot\, (p_1, p_2)}\\
    &\mid\, \shaclextend{\geqslant_i k.f }
    \mid\, \shaclextend{\geqleft{i}{\phi_E}}
    \mid\, \shaclextend{\geqright{i}{\phi_E}}
    \mid\, \shaclextend{\odot\, (p_1, k_1, p_2, k_2)}
    \mid\, \shaclextend{\odot\, (k_1, k_2)}
  \end{split}
\end{equation}
A node constraint may be always satisfied ($\top$), reference another node shape with name $s_N$ that must be satisfied, require a specific node identity $n$ in this place or require a node label $l_N$. 
It may also be the negation $\neg\phi_N$ or conjunction $\phi_N^1 \wedge \phi_N^2$ of other node constraints.
Furthermore, the constraint $\geqslant_i p.\phi_N$ requires $i$ nodes that can be reached via path $p$ to conform to $\phi_N$.
$\odot\,(p_1,p_2)$ is an arbitrary comparison operation between sets of node identities that can be reached via the two path expressions $p_1$ and $p_2$.
\begin{example}
  Consider the shape $\nodeshape[s_1][\geqslant_1 \labelname{colleagueOf}.\labelname{Person}][\labelname{Employee}]$ targeting all nodes with the label $\labelname{Employee}$.
  $s_1$ requires at least one path $\labelname{colleagueOf}$, \ie an outgoing edge that has the label $\labelname{colleagueOf}$, to a node which has the label $\labelname{Person}$. 
  For the graph in \Cref{fig:intro}, node $102$ satisfies this constraint, because the only node reachable via path $\labelname{colleagueOf}$ is node $100$, and $\labelname{Person} \in \lambda(100)$.
  With analogous reasoning, the constraint does not hold for node $100$, because $\labelname{Person} \not\in \lambda(102)$
\label{ex:colleague}
\end{example}
The aforementioned constraints were essentially transferred from core constraint components of the SHACL language.
Novel kinds of constraints are printed in \shaclextend{blue font}.
A qualified number restriction ${\geqslant_i k.f}$ restricts the number of values matching the predicate $f$ for the property $k$.
The qualified number constraints $\geqleft{i}{\phi_E}$ and $\geqright{i}{\phi_E}$ require $i$ incoming or outgoing edges that conform to the given edge constraint $\phi_E$ (defined below).
$\odot\,(p_1,k_1,p_2,k_2)$ compares the annotated sets of values for properties $k_1$ and $k_2$, reached via paths $p_1$ and $p_2$ and $\odot\,(k_1,k_2)$ does the same for the current node.
\begin{example}
  Consider the shape $\nodeshape[s_2][\geqslant_2 \propname{role}.\texttt{string}\ \wedge\ s_1][\targetvk{\stringvalue{Gareth Keenan}}{\propname{name}}]$, which targets all nodes $n$ where $\stringvalue{Gareth Keenan} \in \sigma(n, \propname{name})$.
  For the graph $\examplegraph$ in \Cref{fig:intro}, node $102$ is the only target.
  The constraint $\geqslant_2 \propname{role}.\texttt{string}\ \wedge\ s_1$ requires that this node conform to shape $s_1$ from \Cref{ex:colleague}, as well as that the $\propname{role}$ property has at least two elements of type $\texttt{string}$.
  From \Cref{ex:colleague} it follows that $102$ conforms to $s_1$.
  The property $\sigma(102, \propname{role})$ has two values $\{\stringvalue{sales}, \stringvalue{team leader}\}$, both of which are strings.
  Therefore, node $102$ conforms to $s_2$. 
  Since node $102$ is the only target of $s_2$, $\examplegraph$ conforms to $s_2$ as well.
  \label{ex:and}
\end{example}
Edge shapes apply to edges and, just as a node shape, require specific labels or properties for all targeted edges. 
Similarly to how node shapes constrain outgoing and incoming edges, edge shapes may constrain the source or destination node of an edge.

An edge shape is a tuple $\edgeshape$ consisting of shape name $s_E\in \textrm{Names}(S_E)$, an edge constraint $\phi_E$ and a target edge query $q_E$.
Edge target queries are defined analogously to node target queries in \cref{eq:et} and \Cref{fig:edgetargetqueries}.
\begin{equation}
    \shaclextend{q_E} ::= 
    \textrm{ } \shaclextend{\bot}
    \mid \shaclextend{e}
    \mid \shaclextend{l_E}
    \mid \shaclextend{k}
    \mid \shaclextend{\targetvk{v}{k}}
  \label{eq:et}
\end{equation}
\begin{figure}[t]
  \begin{align*}
    \evalq{\bot} &= \emptyset\\
    \evalq{e} &= \{e\}\\
    \evalq{l_E} &= \{ e \mid e \in E \land l_E \in \lambda(e) \}\\
    \evalq{k} &= \{ e \mid e \in E \land \sigma(e,k) \neq \emptyset\}\\
    \evalq{\targetvk{v}{k}} &= \{e \mid e \in E \land v \in \sigma(e,k) \}
  \end{align*}
  \caption{Evaluation of target edge queries.}
  \label{fig:edgetargetqueries}
\end{figure}

Most constraint components of edge constraints $\phi_E$ are defined similarly to node constraints $\phi_N$, albeit in terms of the respective edge identities $e$, edge labels $l_E$ and edge shapes $s_E$.
Unique to edge constraints are the constraints $\Leftarrow \phi_N$ and $\Rightarrow \phi_N$, which constrain source or destination nodes of an edge to conform to a node shape $\phi_N$.
Edge constraints $\phi_E$ are defined as in \cref{eq:phie}.
\begin{equation}
  \label{eq:phie}
  \begin{split}
    \shaclextend{\phi_E} ::= 
    \textrm{ } \shaclextend{\top}
    \mid \shaclextend{s_E}
    \mid \shaclextend{e}
    \mid \shaclextend{l_E}
    \mid \shaclextend{\neg\phi_E}
    \mid \shaclextend{\phi_E^1 \wedge \phi_E^2}
    \mid \shaclextend{\geqslant_i k.f}
    \mid \shaclextend{\Rightarrow \phi_N}
    \mid \shaclextend{\Leftarrow \phi_N}
    \mid \shaclextend{\odot\, (k_1, k_2)}
  \end{split}
\end{equation}
\begin{example}
Consider $\edgeshape[s_3][\Leftarrow \labelname{Person}\ \wedge \geqslant_1 \labelname{since}.(\geq 01/01/2020)][\labelname{worksFor}]$ which targets edges with the label $\labelname{worksFor}$.
For the two matching edges of graph $\examplegraph$ in \Cref{fig:intro}, $200$ and $203$, only $200$ fulfils the constraint $\Leftarrow \labelname{Person}$, since $\labelname{Person} \in \lambda(100)$ and $\rho(200) = (100,101)$.
That is, the source node of edge $200$ has the label $\labelname{Person}$.
Only edge $203$ fulfils the constraint $\geqslant_1 \labelname{since}.\geq 01/01/2020$, because at least one element of $\sigma(203,\labelname{since}) = \{02/08/2020\}$ fulfil the given value predicate $\geq 01/01/2020$, because $02/08/2020 \geq 01/01/2020$.
Neither edge fulfils $s_3$.
\end{example}
\begin{example}
There is a difference between a node constraint $\geqslant_3 \labelname{colleagueOf}.\labelname{Person}$ and a node constraint $\geqright{3}{(\labelname{colleagueOf}\ \wedge \Rightarrow \labelname{Person})}$.
In the first case, we require $3$ distinct nodes with the label $\labelname{Person}$, reachable via edges that match $\labelname{colleagueOf}$.
In the second case, we require $3$ outgoing edges labelled $\labelname{colleagueOf}$ with destination nodes labelled $\labelname{Person}$. 
The nodes in the second case are not required to be distinct.
Indeed, a graph with a single node having three self-loops could potentially fulfil the second, but never the first constraint.
\end{example}
In addition to these core constraints, we define useful syntactic sugar for both node constraints $\phi_N$ and edge constraint $\phi_E$ as shown in \Cref{fig:sugar}.
For target queries, both conjunction and disjunction can also be defined as syntactic sugar (we use $\phi$ and $q$ to mean either a node or edge constraint and query, respectively).
Any shape with target $q_1 \wedge q_2$ and constraint $\phi$ is equivalent to a shape with target $q_1$ and the constraint $(\phi \wedge \phi_{q_2}) \vee \neg \phi_{q_2}$, where $\phi_{q_2}$ is the constraint equivalent to the target query (\ie validating exactly the targets).
Any shape $s$ with target $q_1 \vee q_2$ and constraint $\phi$ can be expressed via two utility shapes with target $q_1$ and constraint $s$ and target $q_2$ and constraint $s$, as well as the shape $s$ with target $\bot$ and constraint $\phi$.

\begin{figure}[t]
\begin{center}
\begin{equation*}
  \begin{split}
    \bot &:= \neg \top\\
    \leqleft{i}{\phi_E} &:= \neg\geqleft{i+1}{\phi_E}\\
    \leqslant_i p.\phi_N &:= \neg\geqslant_{i+1} p.\phi_N\\
    \leqslant_i k.f &:= \neg\geqslant_{i+1} k.f\\
    \eqleft{i}{\phi_E} &:=\, \geqleft{i}{\phi_E} \,\wedge\, \leqleft{i}{\phi_E}\\
    =_i p.\phi_N &:=\, \geqslant_i p.\phi_N \,\wedge\, \leqslant_i p.\phi_N\\
    =_i k.f &:=\, \geqslant_i k.f \,\wedge\, \leqslant_i k.f\\
  \end{split}
  \hspace{1cm}
  \begin{split}
    \existsleft{\phi_E} &:=\, \geqleft{1}{\phi_E}\\
    \exists p.\phi_N &:=\, \geqslant_1 p.\phi_N\\
    \exists k.f &:=\, \geqslant_1 k.f\\
    \forallleft{\phi_E} &:=\, \leqleft{0}{\neg \phi_E}\\
    \forall p.\phi_N &:=\, \leqslant_0 p.\neg \phi_N\\
    \forall k.f &:=\, \leqslant_0 k.\neg f\\
    \phi_1 \vee \phi_2 &:= \neg (\neg \phi_1 \wedge \neg \phi_2)
  \end{split}
\end{equation*}
\end{center}
\caption{Syntactic sugar for constraints, where $\phi$ is placeholder for either $\phi_N$ or $\phi_E$.
Definitions for syntactic sugar related to $\geqright{i}{\phi_E}$ are omitted, since they are analogous to $\geqleft{i}{\phi_E}$.}
\label{fig:sugar}
\end{figure}
\subsection{Shape Semantics}
\label{subsec:eval}
Our definition of \pshacl allows shape names to occur in constraints, meaning recursive cycles of references to other shapes can arise.
Therefore, we follow an approach defined for recursive SHACL~\cite{DBLP:conf/semweb/CormanRS18} and define evaluation of shapes on the basis of \emph{partial assignments} for graph nodes and edges to sets of shapes.
Our approach then relies on validating a given assignment in polynomial time (\eg by guessing an assignment).

We formally define assignments on the basis of atoms, such that for each atom that pairs the name of a node shape with a node $s_N(n)$ or the name of an edge shape with an edge $s_E(e)$  a truth value from  $\{0,0.5,1\}$ may be assigned.
\begin{definition}[Atoms]
  For a property graph $G=(N,E,\rho,\lambda,\sigma)$ and a set of shapes $S = S_N \cup S_E$, the set {\normalfont $\textrm{atoms}(G,S) = \textrm{atoms}_N(G,S_N)\ \cup\ \textrm{atoms}_E(G,S_E)$} where 
  {\normalfont $\textrm{atoms}_N(G,S_N) = \{ s_N(n) \mid s_N \in S_N \land n \in N\}$} and
  {\normalfont $\textrm{atoms}_E(G,S_N) = \{ s_E(e) \mid s_E \in S_E \land e \in E\}$} 
  is called the set of atoms of $G$ and $S$.
\end{definition}
For the set of atoms of $G$ and $S$, meaning essentially all tuples of shapes in $S$ and nodes (or edges, respectively) in $G$, we define a partial assignment as a function $\Sigma$ that maps for $x \in N \cup E$ all atoms $s(x) \in \textrm{atoms}(G,S)$ to $1$, if the shape $s$ is assigned to $x$, to $0$ if $\neg s$ is assigned to $x$, and to $0.5$ otherwise.
\begin{definition}[Partial Assignment]
  Let $G$ be a property graph and $S$ a set of shapes. 
  A partial assignment \shapeassignment is a total function {\normalfont $\shapeassignment : \textrm{atoms}(G,S) \rightarrow \{0,0.5,1\}$}.
\end{definition}
Evaluating whether a node $n\in N$ of $G$ satisfies a
constraint $\phi_N$, written $\evaln{\phi_N}$ is defined in \Cref{fig:nconstrainteval} and evaluating whether an edge $e\in E$ of $G$ satisfies a constraint $\phi_E$, written $\evale{\phi_E}$, is defined in \Cref{fig:econstrainteval}. 
In the latter figure we omit cases that are trivially analogous to node shapes.
In both figures, $[P]$ is similar to the Iverson bracket, such that $[P]$ evaluates to \true (the constraint is satisfied) if $P$ is true and \false (the constraint is not satisfied) if $P$ is false.
Conditions for evaluation to $0.5$ are given explicitly.

\begin{figure}[thp]
  \begin{align*}
    \evaln{\top} &= \true \\
    \evaln{s_N} &= \shapeassignment(s_N(n))\\
    \evaln{n'} &= \iverson{n' = n}\\
    \evaln{l_N} & = \iverson{l_N \in \lambda(n)}\\
    \evaln{\neg \phi_N} &= 1 - \evaln{\phi_N}\\
    \evaln{\phi_N^1 \land \phi_N^2} &= \textrm{min} \{ \evaln{\phi_N^1}, \evaln{\phi_N^2}\}\\
    \evaln{\geqslant_i p.\phi_N} &= 
    \begin{cases}
      1 & \vert\{n_2 \mid n_2 \in \evalpath{p} \land \evaln[\shapeassignment][n_2]{\phi_N} = 1 \}\vert \geq i\\
      0 & \vert\evalpath{p}\vert\ - \\
        & \vert \{n_2 \mid n_2 \in \evalpath{p} \land \evaln[\shapeassignment][n_2]{\phi_N} = 0 \}\vert < i\\
      0.5 & \textrm{otherwise}\\
    \end{cases}\\
    \evaln{\odot\,(p_1,p_2)} &= \iverson{\evalpath{p_1} \odot \evalpath{p_2}}\\
    \shaclextend{\evaln{\geqleft{i}{\phi_E}}} &\shaclextend{\ =\ }
    \shaclextend{\begin{cases}
      1 & \vert\{e \mid e \in E \land n_2\in N \land \rho(e) = (n_2, n)\\ & \quad \land \evale{\phi_E} = 1 \}\vert \geq i\\
      0 & \vert\{e \mid e \in E \land n_2\in N \land \rho(e) = (n_2, n) \}\vert\ - \\
        & \vert\{e \mid e \in E \land n_2\in N \land \rho(e) = (n_2, n)\\ & \quad \land \evale{\phi_E} = 0 \}\vert < i\\
      0.5 & \textrm{otherwise}\\
    \end{cases}}\\
    \shaclextend{\evaln{\geqright{i}{\phi_E}}} &\shaclextend{\ =\ }
    \shaclextend{\begin{cases}
      1 & \vert\{e \mid e \in E \land n_2\in N \land \rho(e) = (n, n_2)\\ & \quad \land \evale{\phi_E} = 1 \}\vert \geq i\\
      0 & \vert\{e \mid e \in E \land n_2\in N \land \rho(e) = (n, n_2) \}\vert\ - \\
        & \vert\{e \mid e \in E \land n_2\in N \land \rho(e) = (n, n_2)\\ & \quad \land \evale{\phi_E} = 0 \}\vert < i\\
      0.5 & \textrm{otherwise}\\
    \end{cases}}\\
    \shaclextend{\evaln{\geqslant_i k.f}} &\shaclextend{\ = \iverson{\vert\{v \mid v \in \sigma(n,k) \land f(v)\}\vert \geq i}}\\
    \shaclextend{\evaln{\odot\,(p_1,k_1,p_2,k_2)}} &\shaclextend{\ =
    \begin{aligned}
      & \iversonleft \{v \mid n \in \evalpath{p_1}, v \in \sigma(n,k_1)\}\\
      & \ \odot \{v \mid n \in \evalpath{p_2}, v \in \sigma(n,k_2)\}  \iversonright
    \end{aligned}
    }\\
    \shaclextend{\evaln{\odot\,(k_1,k_2)}} &\shaclextend{\ = \iverson{\sigma(n,k_1) \odot \sigma(n,k_2)}}\\
  \end{align*}
  \caption{Evaluation rules for node constraints over graph $G$ with assignment $\shapeassignment$.}
  \label{fig:nconstrainteval}
\end{figure}

\begin{figure}[thp]
  \begin{align*}
    \shaclextend{\evale{s_E}} & \shaclextend{\ = \shapeassignment(s_E(e))}\\
    \shaclextend{\evale{e'}} & \shaclextend{\ = \iverson{e' = e}}\\
    \shaclextend{\evale{l_E}} & \shaclextend{\ = \iverson{l_E \in \lambda(e)}}\\
    \shaclextend{\evale{\Rightarrow \phi_N}} & \shaclextend{\ = \evaln[\shapeassignment][n_2]{\phi_N} \textrm{ where } (n_1,n_2) = \rho(e)}\\
    \shaclextend{\evale{\Leftarrow \phi_N}} & \shaclextend{\ = \evaln[\shapeassignment][n_1]{\phi_N} \textrm{ where } (n_1,n_2) = \rho(e)}\\
  \end{align*}
  \caption{Evaluation rules for edge constraints over graph $G$ with assignment $\shapeassignment$ (omitting some cases that are analogous to cases in \Cref{fig:nconstrainteval}).}
  \label{fig:econstrainteval}
\end{figure}

The semantics of path expressions are defined in \Cref{fig:pathsemantics}.
We write $\{n_1,\ldots,n_i\} = \evalpath{p}$ for the evaluation of path $p$ on graph $G$, such that nodes $n_1,\ldots,n_i$ can be reached via $p$ from node $n$.

\begin{figure}[th]
  \begin{align*}
    \shaclextend{\evalpath{l_E}} &\shaclextend{\ = \{ n_1 \mid e \in E \land (n,n_1) = \rho(e) \land l_E  \in \lambda(e)\}}\\
    \evalpath{p^-} &= \{n_2 \mid n \in \evalpath[\Sigma][n_2]{p}\} \\
    \evalpath{p_1/p_2} &= \bigcup \{ \evalpath[\shapeassignment][n_1]{p_2} \mid n_1 \in \evalpath{p_1} \} \\
    \evalpath{p_1 || p_2} &= \evalpath{p_1} \cup \evalpath{p_2} \\
    \evalpath{p+} &= 
      \begin{cases}
        \emptyset, & \text{if}\  \evalpath{p} = \emptyset\\
        \evalpath{p} \cup \evalpath{p/p+}, & \text{otherwise}
      \end{cases}\\
    \evalpath{p*} &= \{n\} \cup \evalpath{p+} \\
    \evalpath{?p} &= \{n\} \cup \evalpath{p} \\
  \end{align*}
  \vspace{-1.0cm}
  \caption{Evaluation of path expressions.}
  \label{fig:pathsemantics}
\vspace{-0.2cm}
\end{figure}
In order for a property graph $G$ to be valid with respect to a set of shapes $S$, an assignment must exists which complies with all targets and constraints in $S$. 
Transferring terminology from~\cite{DBLP:conf/semweb/CormanRS18} we call such an assignment \emph{strictly faithful}.
\begin{definition}[Strictly Faithful Assignment]\label{def:faithful_assignment1}
  An assignment $\shapeassignment$ for a property graph $G = (N,E,\rho,\lambda,\sigma)$ and a set of shapes $S$ is strictly faithful, if and only if the following 4 properties hold (given shapes of the form $\nodeshape$ and $\edgeshape$):

  \begin{enumerate}
    \item $\forall\ s_N(n) \in \textrm{atoms}(G,S) : \Sigma(s_N(n)) = \evaln{\phi_N}$
    \item $\forall\ s_E(e) \in \textrm{atoms}(G,S) : \Sigma(s_E(e)) = \evale{\phi_E}$

    \item $\forall n \in \evalq{q_N} : \Sigma(s_N(n)) = 1$
    \item $\forall e \in \evalq{q_E} : \Sigma(s_E(e)) = 1$
  \end{enumerate}
\end{definition}
This means a strictly faithful assignment is an assignment, where all atoms are assigned exactly the result of constraint evaluation, all targets $n \in \evalq{q_N}$ are assigned the respective shape $s_N$, and all targets $e \in \evalq{q_E}$ are assigned the respective shape $s_E$.
We define conformance of a graph with respect to a set of shapes on the basis of faithful assignments.

\begin{definition}[Conformance]\label{def:conformance}
  A property graph $G = (N,E,\rho,\lambda,\sigma)$ conforms to a set of shapes $S$ if and only if there exists at least one assignment $\shapeassignment$ for $G$ and $S$ that is strictly faithful.
\end{definition}
\subsection{Requirements and Relationship to SHACL}

As visualized by the colour coding of our definitions, the syntax of \pshacl is an extension of the $\mathcal{L}$ language formalization of SHACL~\cite{DBLP:conf/semweb/CormanRS18}.
There are some exceptions arising from the existence of edges that have identities in property graphs.
In fulfilment of requirements R3 and R4, \pshacl allows qualifying the number of outgoing and incoming edges as well as reachable nodes, whereas SHACL only needs to be concerned with reachable nodes via some path.

Node shapes in SHACL may target all subjects or objects of an RDF property via \texttt{targetSubjectsOf} and \texttt{targetObjectsOf} expressions.
In \pshacl, these target queries are not required.
Instead, fulfilling requirements R1 and R2, as well as R8, \pshacl allows targeting of edges directly with specialized edge shapes.
The respective source and destination nodes can then be constrained in these shapes via $\Leftarrow \phi_N$ and $\Rightarrow \phi_n$, respectively.

Finally, the handling of RDF literals in SHACL differs from constraints dealing with property annotations on nodes (R5 and R7) in \pshacl, as previously discussed.
In addition, \pshacl allows validating property annotations on edges (R6), which do not exist in RDF.




\section{Complexity}
\label{sec:complexity}

We analyse the complexity of validating a property graph against a set of \pshacl shapes.
Before we define the validation problem $\validation$ through the notion of faithfulness of assignments, we simplify the definition of faithful assignments with respect to target queries, by showing that it suffices to consider only cases where there is exactly one target node.
\begin{proposition}
  \label{prop:singletarget}
  For a graph $G=(N,E,\rho,\lambda,\sigma)$ and a set of shapes $S=S_N \cup S_E$ with target nodes $n \in N$ for each $s_N \in S_N$ and target edges $e \in E$ for each $s_E \in S_E$, a graph $G'$ and set of shapes $S'$ can be constructed in linear time, such that $G$ is valid against $S$ if and only if $G'$ is valid against $S'$ and $S'$ has a single target in $G'$.
\end{proposition}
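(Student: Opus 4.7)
The plan is a gadget reduction: introduce a single fresh node $n^*$, make it the sole target of one new shape $s^*$, and encode every original target assertion inside the constraint of $s^*$.

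Construction. Pick fresh identifiers $n^*$, $L^*$, fresh edge labels $\ell_i$ for every original shape $s_i\in S$, fresh marker labels $m_j$ for every edge shape $s_j\in S_E$, and a uniform marker label $\mathrm{OLD}$. Build $G'$ from $G$ by adding $n^*$ with label $L^*$, stamping $\mathrm{OLD}$ onto every original edge, and (i)~for every node shape $\nodeshape[s_i][\phi_i][q_i]$ adding fresh edges from $n^*$ to each $n\in\evalq[G]{q_i}$ labelled $\ell_i$, and (ii)~for every edge shape $\edgeshape[s_j][\phi_j][q_j]$ marking each $e\in\evalq[G]{q_j}$ with $m_j$ and adding a fresh edge from $n^*$ to the source node $u$ of each such $e$ (where $\rho(e)=(u,v)$) labelled $\ell_j$. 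Build $S'$ by replacing every target in $S$ with $\bot$, conjoining $\mathrm{OLD}$ to every edge predicate occurring inside $\geqleft{i}{\phi_E}$ or $\geqright{i}{\phi_E}$ constraints (so that fresh edges are never counted), and adding one new node shape $\nodeshape[s^*][\phi^*][L^*]$ where
$$
\phi^* \ =\ \bigwedge_{s_i\in S_N}\forall\,\ell_i.\,s_i \ \wedge\ \bigwedge_{s_j\in S_E}\forall\,\ell_j.\,\forallright{(\neg m_j\vee s_j)}.
$$

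Correctness. I would establish a bijection between strictly faithful assignments of $(G,S)$ and of $(G',S')$, which yields the biconditional on validity via \Cref{def:conformance}. Given a faithful $\Sigma$ for $(G,S)$, I extend it by computing the induced values at the new atoms ($\Sigma'(s_i(n^*))$, $\Sigma'(s_j(e))$ for every new edge $e$, and $\Sigma'(s^*(n^*))$) through the defining equation of \Cref{def:faithful_assignment1}. Freshness of $\ell_i$, $m_j$, $L^*$, $n^*$ together with the $\mathrm{OLD}$-restriction of edge-counting ensure that for every original atom the evaluation in $G'$ coincides with its evaluation in $G$, so conditions 1--2 of \Cref{def:faithful_assignment1} carry over. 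The original targeting—conditions 3--4—is exactly what the two families of conjuncts inside $\phi^*$ assert at $n^*$: each $\forall\,\ell_i.s_i$ picks out the $\evalq[G]{q_i}$-nodes and checks conformance with $s_i$, and each $\forall\,\ell_j.\forallright{(\neg m_j\vee s_j)}$ enumerates every source node of a target edge of $s_j$ via $\ell_j$ and, on that source, insists that every $m_j$-marked outgoing edge conforms to $s_j$. The reverse direction restricts a faithful $\Sigma'$ to original atoms and reads the original target conditions off the forced value $\Sigma'(s^*(n^*))=1$.

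Complexity and main obstacle. The construction adds at most linearly many new nodes, edges, labels, and shape symbols, and applies only a constant-factor constraint rewrite per occurrence of $\geqleft{i}{\phi_E}$ or $\geqright{i}{\phi_E}$; emitting the target-extension edges is a single scan through the precomputed target-query results. The main obstacle I expect is the interference between newly added edges and original edge-counting constraints: any universal component like $\geqleft{i}{\top}$ inside some $\phi_i$ would otherwise falsely see the fresh $\ell_i$-edges incident to original target nodes. Verifying that the $\mathrm{OLD}$-restriction preserves the three-valued semantics of \Cref{fig:nconstrainteval}—in particular that the $0.5$-boundary cases of $\geqleft{i}{\phi_E}$ and $\geqright{i}{\phi_E}$ really are unaffected, because fresh edges contribute equally to the total and to the "unsatisfying" count—is the delicate step I would have to check explicitly.
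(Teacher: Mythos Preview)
Your construction follows the same gadget idea as the paper's sketch: introduce one fresh node, connect it by fresh edges to all original targets (for edge targets, to their source nodes), drop all old target queries, and encode the former targeting as a single conjunctive constraint on the new node. The paper wires one fresh edge per individual target and picks out each target edge via its edge identity inside a $\geqright{1}{(e_i^j\wedge\phi_{s_E})}$ clause, whereas you use one fresh label per shape together with a universal quantifier and a marker label $m_j$ on target edges; this is a cosmetic variation that keeps $\phi^*$ of size $O(|S|)$ rather than linear in the number of targets.

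Where you go beyond the paper is in explicitly treating the interference problem: the fresh edges are incoming at original nodes, so any $\geqleft{i}{\phi_E}$ in $S$ could miscount them. The paper's sketch is silent on this point. Your $\mathrm{OLD}$-marker rewrite, together with the observation that fresh edges contribute identically to the total-incoming count and to the ``evaluates to $0$'' count in the semantics of \Cref{fig:nconstrainteval} (so they cancel in the $0$- and $0.5$-cases), is exactly what is needed to make the reduction sound, and your argument for it is correct. Note that the rewrite must be applied recursively through $\Leftarrow\phi_N$ and $\Rightarrow\phi_N$ subterms of edge constraints, since those can in turn contain $\geqleft{i}{\cdot}$; your phrasing ``every edge predicate occurring inside'' should be read in that recursive sense.
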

\begin{proof}[Sketch] 
  Essentially, we construct edges from a new, single target node to previous target nodes and source nodes of target edges.
  Then we adapt constraints appropriately.
  Let $s_N^1, \ldots, s_N^n$ and $s_E^1,\ldots,s_E^n$ be shapes in S with targets $n_1^1$, $\ldots$,$n_1^m$, $\ldots$,$n_n^1$, $\ldots,n_n^m$ and targets $e_1^1$, $\ldots$, $e_1^m$, $\ldots$, $e_n^1$, $\ldots,e_n^m$.
  Extend $G$ with a fresh node $n_0$ and fresh edges $ne_i^j$ with $\rho(ne_i^j) = (n_0, n_i^j)$ for each target $n_i^j$ as well as edges $ee_i^j$ with $\rho(ee_i^j) = (n_0, n_1)$ where $(n_1, n_2) = \rho(e_i^j)$ for each target $e_i^j$.
  Then set all target queries for shapes in $S$ to $\bot$ and introduce node shape $s_{N_0}$ with constraint $\phi_{N_0} =\ \geqslant_i ne_1^1.\phi_{s_N^1} \wedge \ldots \wedge \geqslant_i ne_n^m.\phi_{s_N^n} \wedge \geqslant_1 ee_1^1.\geqslant_1 (e_1^1 \wedge \phi_{s_E^1}) \wedge \ldots \wedge \geqslant_1 ee_n^m.\geqslant_1 (e_n^m \wedge \phi_{s_E^n})$.

  \hfill$\square$
\end{proof}
On the basis of this transformation, we can redefine strictly faithful assignments.
\begin{definition}[Strictly Faithful Assignment for Graphs with a Single Target Node]\label{def:faithful_assignment2}
  Let $s_{N_0}$ be the shape and $n_0$ the node constructed by \Cref{prop:singletarget} as the single target node.
  An assignment $\shapeassignment$ for a graph $G = (N,E,\rho,\lambda,\sigma)$ and a set of shapes $S$ is strictly faithful, if and only if:
  \begin{enumerate}
    \item $\forall\ s_N(n) \in \textrm{atoms}(G,S) : \Sigma(s_N(n)) = \evaln{\phi_N}$
    \item $\forall\ s_E(e) \in \textrm{atoms}(G,S) : \Sigma(s_E(e)) = \evale{\phi_E}$
    \item $\shapeassignment(s_{N_0}(n_0)) = 1$
  \end{enumerate}
\end{definition}
The validation problem $\validation$ for validation of property graphs with respect to a set of \pshacl shapes is defined as follows.
\begin{definition}[Validation]
    The problem of validating a property graph $G$ with respect to a set of shapes $S$ (such that in $S$ there is exactly one shape $s_{N_0}$ with a target query different from $\bot$ that targets node $n_0$, which can be constructed via \Cref{prop:singletarget} for any graph and set of shapes) is defined as $\valid{G}{S}{s_{N_0}(n_0)}$.
\end{definition}
We first show that $\validation$ is in NP.
\begin{theorem}
    \label{theorem:innp}
    \validation is in NP.
\end{theorem}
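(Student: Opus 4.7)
The plan is to use the standard guess-and-check paradigm for NP. Given an instance $\valid{G}{S}{s_{N_0}(n_0)}$, we first guess a partial assignment $\shapeassignment : \textrm{atoms}(G,S) \rightarrow \{0,0.5,1\}$. Because $|\textrm{atoms}(G,S)| \leq |S|\cdot(|N|+|E|)$, such an assignment has polynomial description size and can be guessed nondeterministically in polynomial time. We then verify, in deterministic polynomial time, the three conditions of \Cref{def:faithful_assignment2}: (i) that $\shapeassignment(s_{N_0}(n_0)) = 1$, and (ii)/(iii) that for every atom $s_N(n)$ (resp.\ $s_E(e)$) the value $\shapeassignment(s_N(n))$ coincides with $\evaln{\phi_N}$ (resp.\ $\evale{\phi_E}$).

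The heart of the argument is therefore showing that, given $\shapeassignment$, each evaluation $\evaln{\phi_N}$ and $\evale{\phi_E}$ can be carried out in polynomial time. I would argue this by structural induction on the constraint. The base cases $\top$, $l_N$, $n'$, $l_E$, $e'$ are trivial table lookups in $G$. The case $s_N$ or $s_E$ is the critical one for termination: instead of recursively evaluating the referenced shape, we simply look up $\shapeassignment(s_N(n))$ (resp.\ $\shapeassignment(s_E(e))$), which short-circuits any recursion arising from cyclic shape references. The Boolean combinators $\neg$ and $\wedge$ are handled in the obvious way. For the qualified number restrictions $\geqslant_i p.\phi_N$, $\geqleft{i}{\phi_E}$, $\geqright{i}{\phi_E}$, $\geqslant_i k.f$ and the comparison constructs $\odot(p_1,p_2)$, $\odot(p_1,k_1,p_2,k_2)$, $\odot(k_1,k_2)$, we first compute the relevant node sets, property sets, or edges (each of size at most $|N|$, $|E|$, or polynomial in $|V|$ restricted to the finite support of $\sigma$), then evaluate the inner constraint on each candidate by the inductive hypothesis, and finally count or compare. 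The edge-specific constructs $\Leftarrow \phi_N$ and $\Rightarrow \phi_N$ dispatch to node-constraint evaluation on $n_1$ or $n_2$ where $\rho(e)=(n_1,n_2)$.

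A subroutine required throughout is the evaluation of path expressions $\evalpath{p}$. I would argue by induction on $p$ that this set can be computed in polynomial time in $|G|$: labels, inverses, sequences, and alternations are immediate, and the Kleene cases $p*$, $p+$, $?p$ reduce to standard transitive-closure / reachability computations over the finite node set $N$, which are polynomial. Since path-evaluation appears only as a subroutine that returns a set of at most $|N|$ nodes, and each constraint contains a fixed number of path sub-expressions, the total work per constraint remains polynomial.

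Putting it together, verifying all conditions of \Cref{def:faithful_assignment2} amounts to iterating over the polynomially many atoms and performing for each a polynomial-time evaluation of a constraint taken from $S$; together with checking $\shapeassignment(s_{N_0}(n_0))=1$, this yields a polynomial-time verifier, so $\validation \in$ NP. The only real subtlety to make watertight is the termination of constraint evaluation in the presence of recursive shape references: this is precisely where consulting $\shapeassignment$ for the truth values of shape atoms, rather than unfolding them, is essential, mirroring the role of partial assignments in the recursive SHACL semantics of~\cite{DBLP:conf/semweb/CormanRS18}.
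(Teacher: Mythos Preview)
Your high-level strategy coincides with the paper's: guess a polynomial-size assignment $\shapeassignment$ and verify strict faithfulness in deterministic polynomial time. You also correctly isolate the crucial point that references to shape names $s_N$, $s_E$ are resolved by lookup in $\shapeassignment$ rather than by unfolding, which is exactly how both you and the paper defuse recursion.

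There is, however, a genuine gap in your verification argument. The structural induction you sketch does not yield a \emph{uniform} polynomial bound. In the inductive step for $\geqslant_i p.\phi_N$ you say ``evaluate the inner constraint on each candidate by the inductive hypothesis''. But the IH only tells you that evaluating $\phi_N$ at one node takes time bounded by some polynomial depending on $\phi_N$; multiplying by $|N|$ at every nesting level gives a bound of order $|N|^{d}$ where $d$ is the nesting depth of qualified number restrictions. Since $d$ is part of the input (it can be linear in $|S|$), this is exponential in the combined input size, so the argument as written does not establish a polynomial-time verifier.

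The paper closes this gap explicitly via its \Cref{lem:oneop}: it normalises $S$ in polynomial time so that every constraint contains at most one operator, introducing a fresh shape name for each subformula. After normalisation, the inner $\phi_N$ in every $\geqslant_i p.\phi_N$ is a bare shape name, so its value at each candidate node is a single lookup in $\shapeassignment$ rather than a recursive evaluation; \Cref{lem:validation} then gives the per-atom polynomial bound directly. An equivalent repair of your argument is to tabulate $\evaln{\psi}$ and $\evale{\psi}$ bottom-up for every subformula $\psi$ occurring in $S$ and every $n\in N$, $e\in E$: there are $O(|S|\cdot(|N|+|E|))$ table entries, each computable in polynomial time from entries for strictly smaller subformulas, so the total work is polynomial. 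Either the normalisation or this memoisation step is what your proof is missing.

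A secondary difference is that the paper first eliminates compound path expressions (\Cref{lem:nopath}) by materialising, for each path $p$ occurring in $S$, fresh edges labelled $l_p$ between every pair $(n,n')$ with $n'\in\evalpath{p}$, appealing to a known polynomial-time bound for regular path query evaluation. Your alternative of treating $\evalpath{p}$ as a polynomial-time subroutine inside the verifier is equally valid; both reduce path handling to the same underlying reachability computation.
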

\paperonly{\begin{proof}[Sketch]
    In order to show that $\valid{G}{S}{s_{N_0}(n_0)}$ is in NP, we first construct, in polynomial time, an instance $\valid{G'}{S'}{s_{N_0}(n_0)}$ which is true if and only if $\valid{G}{S}{s_{N_0}(n_0)}$ is true, and $S'$ does not contain any path expressions (except for $l_E$) and each constraint in $S'$ has at most one operator.
    We assume an oracle for a strictly faithful assignment of such an instance $\valid{G'}{S'}{s_{N_0}(n_0)}$.
    Then we can, for each $s \in S'$, compute $\evaln{\phi_S}$ for each $n \in N$ and $\evale{\phi_s}$ for each $e \in E$ in polynomial time in $\vert \Sigma \vert + \vert G' \vert + \vert S' \vert$.
    \hfill$\square$
\end{proof}}
\reportorpaper{
The proof can be found in \Cref{sec:appa}.
}{
The complete proof can be found in an extended version of this work available on arXiv.
}
We next follow NP-hardness from the NP-hardness of $\mathcal{L}$.
\begin{corollary}
   \label{cor:nphard}
   RDF graph validation with $\mathcal{L}$, which is equivalent to SHACL, is clearly reducible to \pshacl validation over property graphs, since RDF graphs can be trivially represented in property graphs and constraints in $\mathcal{L}$ are a subset of \pshacl constraints.
   According to~\cite{DBLP:conf/semweb/CormanRS18}, $\mathcal{L}$ is NP-hard.
   Therefore, \pshacl is also NP-hard.
\end{corollary}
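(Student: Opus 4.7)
The plan is to make explicit the reduction claimed in the statement: exhibit a polynomial-time translation $\tau$ from instances of $\mathcal{L}$-validation over RDF to instances of \validation{} over property graphs, and argue that $\tau$ preserves the existence of a faithful assignment. NP-hardness of \validation{} then follows directly from the NP-hardness of $\mathcal{L}$ proved in~\cite{DBLP:conf/semweb/CormanRS18}.

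For the graph part of $\tau$, I would follow the correspondence sketched in \Cref{tab:mapping}: each IRI appearing in the RDF triple set becomes a node identifier in $N$; each triple of the form $(s, \texttt{rdf:type}, c)$ contributes $c$ to $\lambda(s)$; every other triple $(s, p, o)$ produces a fresh edge identifier $e \in E$ with $\rho(e) = (s, o)$ and $p \in \lambda(e)$. RDF literals can be treated as distinguished sink nodes, and $\sigma$ is left empty. This construction is clearly linear in the input size. For the shape part, observe that every syntactic form of $\mathcal{L}$ is already a \pshacl{} form: all $\mathcal{L}$-constructs in \cref{eq:phin} and \cref{eq:tn} are displayed in \shacltranslate{black font}, signalling that they belong to the SHACL-inherited sublanguage of \pshacl. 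Hence $\tau$ can act as the identity on shapes; the few SHACL target constructs outside \pshacl's target syntax (\texttt{targetSubjectsOf}, \texttt{targetObjectsOf}) can be simulated by combining a $\bot$ target with a path-based constraint, exactly as discussed at the end of the previous subsection.

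The crux of the argument is semantic preservation: for every RDF graph $H$, every set of $\mathcal{L}$-shapes $S$, and every partial assignment $\shapeassignment$, faithfulness in the sense of Corman et al.~\cite{DBLP:conf/semweb/CormanRS18} should hold iff $\shapeassignment$ is strictly faithful for $(\tau(H), \tau(S))$ in the sense of \Cref{def:faithful_assignment1}. I would prove this by structural induction on constraints using the three-valued evaluation rules of \Cref{fig:nconstrainteval}. The atomic, Boolean, and qualified-counting cases are immediate because $\tau$ preserves node identities, labels, and edge neighbourhoods. The delicate case is path evaluation: I must show that the property-graph path semantics in \Cref{fig:pathsemantics} agrees with the RDF property-path semantics on images of $\tau$. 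Since every edge produced by $\tau$ carries exactly one label equal to the original RDF predicate, an inner induction on $p$ dispatches this.

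The main obstacle will be this path-case bookkeeping, together with checking that the syntactic gadget simulating \texttt{targetSubjectsOf}/\texttt{targetObjectsOf} neither inflates the instance super-polynomially nor disturbs the correspondence between assignments; it is here that the single-label property of translated edges is genuinely used. Once semantic preservation is in place, \Cref{def:conformance} yields that $H$ is $\mathcal{L}$-valid against $S$ iff $\tau(H)$ is \pshacl-valid against $\tau(S)$, so the polynomial reduction is complete and NP-hardness transfers to \validation, as claimed.
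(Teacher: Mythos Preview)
Your proposal is correct and follows the same reduction-from-$\mathcal{L}$ idea that the paper relies on; indeed, the paper offers no proof beyond the text of the corollary itself, treating the reducibility as self-evident. Your explicit construction of $\tau$, the structural induction on constraints, and the single-label path argument are a sound and considerably more detailed elaboration of what the paper compresses into the phrases ``trivially represented'' and ``are a subset of''; the only caveat is that the \texttt{targetSubjectsOf}/\texttt{targetObjectsOf} gadgetry, while harmless, is likely unnecessary, since the NP-hardness construction in~\cite{DBLP:conf/semweb/CormanRS18} does not rely on those target forms.
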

Then we can also conclude that $\validation$ for \pshacl is NP-complete.
\begin{corollary}
    \validation is NP-complete, since it is both NP-hard (shown in \Cref{cor:nphard}) and in NP (shown in \Cref{theorem:innp}).
\end{corollary}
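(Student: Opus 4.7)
The plan is essentially to observe that this corollary is a direct instantiation of the definition of NP-completeness, so no new proof work is actually required beyond the two preceding results. Concretely, I would begin by recalling that a decision problem is NP-complete exactly when it lies in NP and every problem in NP reduces to it in polynomial time (equivalently, it is NP-hard). Thus the proof reduces to citing the two previously established facts about \validation and combining them.

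First, I would invoke \Cref{theorem:innp}, which establishes that \validation is in NP via the guess-and-check strategy on strictly faithful assignments enabled by \Cref{prop:singletarget} and the polynomial-time verifiability of the evaluation rules in \Cref{fig:nconstrainteval,fig:econstrainteval}. Second, I would invoke \Cref{cor:nphard}, which gives NP-hardness through the trivial embedding of SHACL validation (the language $\mathcal{L}$ of~\cite{DBLP:conf/semweb/CormanRS18}) into \pshacl validation: every RDF graph can be viewed as a property graph (with empty property assignments and singleton edge label sets), and every $\mathcal{L}$-constraint is syntactically a \pshacl node constraint, so any polynomial-time reduction from an NP-hard problem to $\mathcal{L}$-validation lifts verbatim to \validation.

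There is no real obstacle here: the corollary is purely a bookkeeping step combining \Cref{theorem:innp} and \Cref{cor:nphard}. The only subtlety worth double-checking is that the embedding used in \Cref{cor:nphard} preserves both the instance and the answer (i.e., an RDF graph $G$ validates against an $\mathcal{L}$-schema $S$ iff its property-graph encoding validates against the same $S$ read as \pshacl), which follows immediately from inspecting the evaluation clauses that \pshacl inherits from $\mathcal{L}$ in black font. With this sanity check in place, the conclusion \validation $\in$ NP-complete is immediate.
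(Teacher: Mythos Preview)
Your proposal is correct and matches the paper's approach exactly: the corollary is stated without a separate proof, simply combining \Cref{theorem:innp} (membership in NP) and \Cref{cor:nphard} (NP-hardness) via the definition of NP-completeness. Your additional sanity check about the embedding preserving validity is accurate but goes slightly beyond what the paper itself spells out here.
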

We only consider the combined complexity here, even though graphs are typically significantly larger than sets of shapes.
However, from this we infer that validation for a fixed set of shapes (data complexity) and a fixed graph (constraint complexity) are also NP-complete, since they are already NP-complete for $\mathcal{L}$ as shown in~\cite{DBLP:conf/semweb/CormanRS18}, and combined complexity of validation for \pshacl is in NP.
\section{Implementation}
\label{sec:implementation}

Drawing inspiration from an experimental feature of the SHaclEX~\cite{shaclex} implementation of ShEx~\cite{shex} and SHACL~\cite{shacl}, we implement a prototypical validator for \pshacl by encoding the validation problem as an answer set program. 
Answer set programming (ASP) allows for declarative implementations of NP-hard search problems, such as \pshacl validation with faithful assignments.
In particular, we rely on ASP for efficiently finding candidate assignments (in the worst-case considering all possible assignments), while deciding whether an assignment is faithful is a straightforward mapping of our validation semantics to ASP.

The implementation consists of three components: An encoding of property graphs and \pshacl shapes, both of which are straight-forward and can be generated from non-ASP representations.
A set of rules directly representing the validation semantics of \pshacl (\Cref{subsec:eval}).
And finally the search problem of finding faithful assignments.
With these components, an ASP solver (our implementation relies on Clingo\footnote{\url{https://potassco.org/clingo/}}) produces one (or more) faithful assignments for the graph and set of shapes (if any exist).

In addition to the ASP encoding, we also provide a surrounding set of tools, including a concrete syntax for \pshacl shapes and a corresponding parser, as well as a tool for extracting and encoding Neo4j\footnote{\url{https://neo4j.com/}} instances.
The graph encoding is based on the Neo4j JSON export format and therefore straight-forward to replicate for other property-graph stores.
The tool suite is available on GitHub\footnote{\therepo}, including further documentation and examples.
\reportorpaper{
More details about the ASP encoding and a demonstration can be found in \Cref{sec:appb}.
}{
More details about the ASP encoding and a short demonstration can be found in the extended version of this work.
}

\subsection{Towards Practical Implementations of \pshacl}

Our implementation is well-suited as a reference implementation, for experimenting with \pshacl examples, and for validating smaller-sized graphs.
For large-scale graphs, the explicit ASP encoding of the data graph may be too inefficient, both in terms of runtime and memory requirements. 
Instead, efficient validation demands an implementation operating directly on a specific property-graph store.
Such an implementation could, for example, aim to replicate the resolution approach of an ASP solver for finding candidate assignments and evaluate the validation procedure directly on the graph.
For simplified SHACL shapes that do not include recursive shape references, efficient validation approaches are well-known and widely used in real-world SHACL implementations.
These approaches, operating on graph stores directly, could be applied for \pshacl as well.
Another alternative would be to adapt validation over SPARQL endpoints~\cite{DBLP:conf/semweb/CormanFRS19a} for Cypher and \pshacl instead.
Indeed, neosemantics~\cite{neosemantics} relies on Cypher for the validation of SHACL over RDF graphs encoded as property graphs.
Such an approach, as is also shown by \cite{DBLP:conf/semweb/CormanFRS19a}, can be extended to validate recursive shapes by inclusion of a SAT solver.
\section{Related Work}
\label{sec:relatedwork}

There are a number of schema languages for property graphs in proprietary implementations of graph databases.
For instance, the data definition language for Cypher~\cite{DBLP:conf/sigmod/FrancisGGLLMPRS18} described in the Neo4j manual~\cite{neoschema} allows for simple constraints regarding the existence or uniqueness of properties.
For TigerGraph~\cite{DBLP:journals/corr/abs-1901-08248}, a similar implementation exists.
However, these systems lack a formal description, making their expressiveness, features and complexity hard to assess.

Only a small number of property-graph schema languages have been formally defined.
In~\cite{DBLP:conf/grades/HartigH19}, the GraphQL~\cite{graphql} schema language is used to define restrictive property-graph schemas, where for each node label a GraphQL object type can be defined.
This allows constraining the existence of certain properties, edges, and properties on these edges via field definitions of the object types.
The schemas are closely tied to node labels, meaning the approach does not allow for the validation of edges as individual entities, which is crucial for validating metadata annotations across an entire graph.
The approach also omits other elements supported by \pshacl, such as negation, qualified number restrictions and path expressions in number restrictions or equality constraints.
Validation with constraints that are associated to labels can be emulated with \pshacl target queries.
Graph validation with this approach is shown to be in $\textrm{AC}_0$.

\cite{DBLP:conf/amw/Angles18} defines property graph schemas, also focusing on node and edge types on the basis of labels.
In particular, schemas allow for restricting the data types of specific properties on nodes and edges, as well as the edges allowed between node types. 
More advanced constraints are mentioned, but not formally defined.
In general, this approach only provides a small subset of the features of \pshacl.

While shape-based validation approaches such as SHACL~\cite{shacl} and ShEx~\cite{shex} exit for validating RDF graphs, to the best of our knowledge no shape-based validation language for property graphs has been formally defined until now.
A syntactic construct for SHACL validation of RDF* (and other reification-based RDF extensions) has been proposed in an unofficial draft proposal~\cite{shacldraft}, though no semantics has been specified.
The \texttt{reifiableBy} construct allows constraining an edge via a node shape for provenance annotations.
The approach is similar to our notion of edge shapes and our semantics can be applied, as long as graphs are restricted to property graphs (\ie edge properties are restricted to a given set of value domains).
Finally, there exists an extension for Neo4j which implements SHACL validation for RDF subsets of property graphs~\cite{neosemantics}.
\section{Concluding Remarks}
\label{sec:summary}

We present \pshacl, a shape language extending SHACL for validating property graphs.
We define the semantics of this language based on the notion of faithfulness of partial assignments and are therefore able to support shape references and negation.
Despite the addition of property-graph specific constructs, such as edge shapes that target edges with identities, the complexity of validating graphs against sets of \pshacl shapes does not increase when compared to SHACL.
The validation problem remains NP-complete.

As future work, we plan to investigate the satisfiability problem of \pshacl shapes and then further utilize these results to define a validation approach for property-graph queries.
We are also interested in extending \pshacl with the unique features introduced by G-CORE, in particular first-class paths, and RDF*, in particular triples in object position of other triples.
 



\bibliographystyle{splncs04}
\bibliography{paper}


\appendix
\reportonly{\newpage}
\reportonly{\section{Proofs}
\label{sec:appa}

\subsection{\validation is in NP}

\begin{theorem}
    \label{theo:polypath}
    For a path expression $p$ and a property graph $G$, deciding whether the node $n'$ is in $\evalpath{p}$ is possible in polynomial time in $\vert G \vert$. This follows from~\cite{DBLP:conf/semweb/KostylevR0V15}.
\end{theorem}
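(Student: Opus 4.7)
The plan is to reduce the problem to standard regular path query (RPQ) evaluation over graphs, for which polynomial-time algorithms are well established. First I would convert the path expression $p$, whose grammar in Eq.~\ref{eq:qn} is that of regular expressions augmented with the inverse operator, into a nondeterministic finite automaton $A_p$ of size $O(|p|)$ via Thompson's construction, treating the inverse operator by introducing inverse labels $l_E^-$ and pushing inverses down to atomic subexpressions using the standard rewritings $(p_1/p_2)^- \equiv p_2^-/p_1^-$, $(p_1 || p_2)^- \equiv p_1^- || p_2^-$, and $(p*)^- \equiv (p^-)*$. The working alphabet is then $L_E \cup \{l_E^- \mid l_E \in L_E\}$.

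Next I would form the product graph $G \times A_p$ whose nodes are pairs $(n, q)$ with $n \in N$ and $q$ a state of $A_p$, and in which $(n_1, q_1)$ has a transition to $(n_2, q_2)$ whenever either (i) some $e \in E$ satisfies $\rho(e) = (n_1, n_2)$ with a label $l \in \lambda(e)$ for which $A_p$ has the transition $q_1 \xrightarrow{l} q_2$, or (ii) symmetrically for an inverse transition $q_1 \xrightarrow{l^-} q_2$ using an edge $e$ with $\rho(e) = (n_2, n_1)$ and $l \in \lambda(e)$. This product has size $O(|G| \cdot |p|)$ and is constructible in polynomial time. Deciding whether $n' \in \evalpath{p}$ then reduces to asking whether some accepting state of $A_p$ is reachable in the product from $(n, q_0)$ to a pair of the form $(n', q_f)$, which is a linear-time BFS over the product.

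The only feature specific to property graphs, as opposed to the RDF setting addressed in~\cite{DBLP:conf/semweb/KostylevR0V15}, is that $\lambda(e)$ is a set of edge labels rather than a single label; this is absorbed transparently by existentially quantifying over $l \in \lambda(e)$ in the transition relation, so no additional combinatorial blowup arises. I do not anticipate a genuine obstacle: the main subtlety is simply the correct handling of inverses and transitive closure inside the NFA construction, and both are routine once inverses have been pushed to atoms. The polynomial bound thus transfers directly from the cited result, yielding the claim.
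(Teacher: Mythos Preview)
Your argument is correct and is the standard reduction of two-way regular path query evaluation to reachability in a product automaton; the handling of inverses by pushing them to atomic labels, the Thompson-style NFA, and the observation that multi-labelled edges are absorbed by existential quantification over $l \in \lambda(e)$ are all sound. Note that the paper does not actually prove this theorem at all: it is stated with the justification contained in the theorem text itself, namely the citation to~\cite{DBLP:conf/semweb/KostylevR0V15}, and is then used as a black box in the proof of Lemma~\ref{lem:nopath}. Your proposal therefore supplies strictly more detail than the paper, spelling out the construction that the cited reference (on SPARQL property-path complexity) underwrites. One small remark: the theorem speaks of polynomial time in $\vert G \vert$, whereas your product construction yields $O(\vert G \vert \cdot \vert p \vert)$; this is consistent because the path expression is treated as fixed here, but it is worth stating explicitly.
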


\begin{lemma}
    \label{lem:nopath}
    The validation problem $\valid{G}{S}{s_{N_0}(n_0)}$ can be transformed in polynomial time to $\valid{G'}{S'}{s_{N_0}(n_0)}$ such that $\valid{G}{S}{s_{N_0}(n_0)}$ is true if and only if $\valid{G'}{S'}{s_{N_0}(n_0)}$ is true, and $S'$ contains only path expressions of the form $l_E$.
\end{lemma}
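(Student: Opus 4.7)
The plan is to materialise every complex path expression as a fresh edge label in the graph, thereby reducing every constraint of the form $\geqslant_j p.\phi_N$, $\odot(p_1,p_2)$, or $\odot(p_1,k_1,p_2,k_2)$ to one whose paths are single edge labels. First I would enumerate the (polynomially many) distinct path expressions $p_1,\ldots,p_m$ that appear in $S$ and pick a fresh edge label $l_{p_i}$ for each. By \Cref{theo:polypath}, for every $i$ and every $n\in N$ the set $\evalpath[\shapeassignment][n][G]{p_i}$ can be computed in polynomial time in $\vert G\vert$. I would construct $G'$ from $G$ by adding, for each triple $(i,n,n')$ with $n' \in \evalpath[\shapeassignment][n][G]{p_i}$, a fresh edge $e_{i,n,n'}$ with $\rho(e_{i,n,n'})=(n,n')$, $\lambda(e_{i,n,n'})=\{l_{p_i}\}$, and no property annotations. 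At most $m\cdot\vert N\vert^2$ edges are added, keeping $\vert G'\vert$ polynomial in $\vert G\vert+\vert S\vert$.

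Next I would obtain $S'$ from $S$ by replacing every path occurrence $p_i$ inside a constraint by the single edge label $l_{p_i}$. By construction $\evalpath[\shapeassignment][n][G']{l_{p_i}} = \evalpath[\shapeassignment][n][G]{p_i}$ for every $n$, so each rewritten constraint evaluates in $(G',\shapeassignment)$ exactly as its original in $(G,\shapeassignment)$.

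The hard part is that the fresh edges are still visible to the label-agnostic qualified number restrictions $\geqleft{i}{\phi_E}$ and $\geqright{i}{\phi_E}$, which count \emph{all} incoming or outgoing edges satisfying $\phi_E$ and would otherwise be perturbed by the additions. I would neutralise this by rewriting every such constraint in $S'$ as $\geqleft{i}{\phi_E\wedge\neg l_{p_1}\wedge\cdots\wedge\neg l_{p_m}}$ (analogously for $\geqright$), so that fresh edges are excluded from the count. No further adjustments are needed: each edge target query $q_E$ matches only via an original edge identity, label, or property key, and every fresh edge's identity $e_{i,n,n'}$, label $l_{p_i}$, and empty property assignment is fresh, so no original edge target query or edge shape targets or is referenced by a fresh edge. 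The rewrite grows $\vert S'\vert$ by a factor of at most $m+1$ and hence keeps it polynomial.

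It then remains to verify that strictly faithful assignments correspond bijectively between $(G,S)$ and $(G',S')$. A faithful $\shapeassignment$ for $(G,S)$ extends uniquely to a faithful $\shapeassignment'$ for $(G',S')$ by setting $\shapeassignment'(s_E(e_{i,n,n'})) = \evale[\shapeassignment'][e_{i,n,n'}][G']{\phi_E}$ for every edge shape; since fresh edges bear no original label, identity, or property, these values are determined outright. Conversely, restricting any faithful $\shapeassignment'$ back to the original atoms yields a faithful assignment for $(G,S)$ by the equality of path evaluations and the unchanged behaviour of all other constraints. Hence $\valid{G}{S}{s_{N_0}(n_0)}$ holds iff $\valid{G'}{S'}{s_{N_0}(n_0)}$ does, with the transformation computable in polynomial time.
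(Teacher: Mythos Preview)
Your construction is precisely the paper's: for each path expression $p$ occurring in $S$ introduce a fresh label $l_p$, add an $l_p$-labelled edge from $n$ to every $n'\in\evalpath{p}$ (polynomially many, computable in polynomial time by \Cref{theo:polypath}), and replace every occurrence of $p$ in the constraints by $l_p$. The paper's proof stops at the observation $\evalpath{p}=\evalpath[\shapeassignment'][n][G']{l_p}$ and concludes equivalence of the two validation instances directly from that.

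You actually go one step beyond the paper. You notice that the freshly inserted edges are visible to the label-agnostic edge-counting constraints $\geqleft{i}{\phi_E}$ and $\geqright{i}{\phi_E}$ and therefore conjoin $\neg l_{p_1}\wedge\cdots\wedge\neg l_{p_m}$ inside each such $\phi_E$ to filter them out. The paper's own proof does not address this point, yet without your fix a constraint such as $\neg\,\geqright{2}{\top}$ can flip truth value after the transformation (a node with one outgoing edge in $G$ may acquire additional outgoing $l_p$-edges in $G'$). So your argument is strictly more careful than the paper's on this issue.

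One minor overstatement: the extension of a faithful $\shapeassignment$ to the fresh-edge atoms is not in general \emph{unique} or ``determined outright''. An edge constraint $\phi_E$ may contain a shape reference $s_{E'}$ or a subformula $\Rightarrow\phi_N$ / $\Leftarrow\phi_N$, whose value on a fresh edge is not fixed merely by the freshness of that edge's identity, label set and (empty) property map. What you need for the lemma is only that \emph{some} faithful extension exists; this is plausible because, after your rewrite, the values assigned to fresh-edge atoms never feed back into the evaluation of any original atom and hence cannot disturb condition~(3) of \Cref{def:faithful_assignment2}, but the sentence as written promises more than it delivers.
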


\begin{proof}

    Let $p$ be a path expression, $G=(N,E,\rho,\lambda,\sigma)$ a property graph and $\shapeassignment$ a faithful assignment.
    $\evalpath{p}$ is the set of all nodes, such that there is a path from $n$ to $n' \in \evalpath{p}$ in $G$.
    For any given pair $(n,n') \in N \times N$ it can be decided in polynomial time in $\vert G \vert$, whether $n' \in \evalpath{p}$ (\Cref{theo:polypath}).

    Let $\valid{G}{S}{s_{N_0}(n_0)}$ be a validation problem and let $P$ be the set of all path expressions that appear in constraints in $S$.
    For each $p \in P$, let $l_p$ be fresh, unique edge label and $e_{p,n,n'}$ a fresh, unique edge from $n$ to each $n' \in \evalpath{p}$.
    
    Let $G'=(N,E',\rho',\lambda',\sigma)$ be a graph defined with $E'= E \cup \{e_{p,n,n'} \mid p \in P, n' \in \evalpath{p}\}$, $\rho' = \rho \cup \{e_{p,n,n'} \mapsto (n,n') \mid p \in P \land n' \in \evalpath{p}\}$ and $\lambda' = \lambda \cup \{ e_{p,n,n'} \mapsto l_p \mid p \in P, n' \in \evalpath{p}\}$.
    Given the previous statement, $G'$ can be computed in polynomial time.

    Let $S'$ be a new constraint set constructed from $S$, where each occurrence of path $p \in P$ is replaced by $l_p$ and $\shapeassignment'$ the same assignment as $\shapeassignment$, albeit using the equivalent shapes in $S'$.
    Then it is true, that for each $(n,n') \in N \times N$ the following equivalence holds: $\evalpath{p} = \evalpath[\shapeassignment'][n][G']{e_p}$.
    From this we can follow immediately that for each $n \in N$ and constraint $\phi$ in $S$, $\evaln{\phi} = \evaln[\shapeassignment'][n][G']{\phi}$, which means that $\valid{G}{S}{s_{N_0}(n_0)}$ is true, if and only if $\valid{G'}{S'}{s_{N_0}(n_0)}$ is true.

    \hfill$\square$
\end{proof}

\begin{lemma}
    \label{lem:oneop}
    The validation problem $\valid{G}{S}{s_{N_0}(n_0)}$ (where no constraints contain any path expressions except of the form $l_E$) can be transformed, in polynomial time in $\vert S \vert$ to a problem $\valid{G}{S'}{s_{N_0}(n_0)}$ such that $\valid{G}{S}{s_{N_0}(n_0)}$ is valid if and only if\\$\valid{G}{S'}{s_{N_0}(n_0)}$ is valid, and no constraint in $S'$ contains any path expressions other than of the form $l_E$ or more than one operator.
\end{lemma}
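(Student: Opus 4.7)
The plan is to flatten nested constraints by introducing a fresh shape name for every proper sub-constraint, so that every constraint in the resulting $S'$ contains at most one operator whose arguments are all atomic. The key observation is that, in a strictly faithful assignment, the identity $\shapeassignment(s(x)) = \evaln[\shapeassignment][x]{\phi_s}$ holds for every shape $s$; hence replacing an occurrence of a sub-constraint $\psi$ by a fresh shape name $s_\psi$, and simultaneously adding to $S'$ an auxiliary shape $\nodeshape[s_\psi][\psi][\bot]$ (or an edge shape, if $\psi$ is an edge constraint) that pins down the value of $s_\psi$, does not change the set of strictly faithful assignments up to projection onto the original atoms.

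Concretely, I proceed by induction on the structure of constraints. For every shape in $S$ I walk the syntax tree of its constraint. Whenever an operator is applied to a compound argument $\psi$ -- that is, $\psi$ is not of the form $\top$, a shape name, a node or edge identifier, or a label -- I substitute $\psi$ by a fresh shape name $s_\psi$ and add the matching auxiliary shape to $S'$, then recursively apply the procedure to $\psi$ itself. This terminates since the nesting depth strictly decreases, and is applied uniformly to the inner arguments of $\neg$, $\wedge$, the counting operators $\geqslant_i l_E.\phi_N$, $\geqleft{i}{\phi_E}$ and $\geqright{i}{\phi_E}$, the cross-side constraints $\Leftarrow \phi_N$ and $\Rightarrow \phi_N$, and the comparison operators $\odot$. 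Since by assumption every path expression already has the form $l_E$, no further path processing is needed, and each constraint in $S'$ ends up with at most one top-level operator whose arguments are atomic.

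For correctness I would argue the two directions. Given a strictly faithful $\shapeassignment$ for $G$ and $S$, I extend it to $\shapeassignment'$ on the fresh atoms by setting $\shapeassignment'(s_\psi(x)) = \evaln[\shapeassignment][x]{\psi}$ for every introduced shape $s_\psi$ and every applicable $x$. An induction over the replacement depth then yields the three conditions of \Cref{def:faithful_assignment2} for $\shapeassignment'$ with respect to $G$ and $S'$, and the target condition $\shapeassignment(s_{N_0}(n_0)) = 1$ carries over unchanged. For the converse, the restriction of any strictly faithful $\shapeassignment'$ for $G$ and $S'$ to the atoms of $S$ is strictly faithful for the original problem, again by the same induction; at every replacement position, the defining equation for the auxiliary shape forces $\shapeassignment'(s_\psi(x))$ to coincide with the evaluation of $\psi$ under the restriction. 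Polynomiality is immediate: each sub-term of each constraint in $S$ contributes at most one auxiliary shape, so $|S'|$ is linear in the syntactic size of $S$.

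The main obstacle is the three-valued semantics of the counting operators, where the value depends on counting how many witnesses evaluate to $1$ and how many to $0$, with $0.5$-witnesses blocking a definite decision. The replacement must preserve exactly the multisets of truth values produced by the replaced sub-constraint across all witnesses; since strict faithfulness forces $\shapeassignment'(s_\psi(x)) = \evaln[\shapeassignment][x]{\psi}$ pointwise, these multisets agree by construction, but carefully writing out the inductive step for the counting operators is the most delicate piece of the proof.
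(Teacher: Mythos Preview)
Your proposal is correct and follows essentially the same approach as the paper: both flatten constraints by introducing a fresh shape name (with target $\bot$) for each compound subformula, recursing down the syntax tree until every constraint has at most one operator. Your write-up is in fact more careful than the paper's own proof, which dismisses correctness with ``follows immediately from the definition of constraint evaluation''; your explicit treatment of the bidirectional extension/restriction of strictly faithful assignments and your attention to the three-valued counting operators fill in exactly the details the paper leaves implicit.
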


\begin{proof}

    We transform $S$ to $S'$ by introducing fresh shape names for each subformula in $S$.
    We call the function that transforms a shape $s \in S $ $\textrm{normalize}(s)$, defined as follows:
    If a constraint $\phi$ (either node or edge constraint) in $S$ has only one operator ($\neg$,$\wedge$,$\geqslant_i p.\phi_N$,$\geqleft{i}{\phi_E}$,$\geqright{i}{\phi_E}$ or $\Rightarrow \phi_N$,$\Leftarrow \phi_N$) where again $\phi$ can be either $\phi_N$ or $\phi_E$, then it remains the same in $S'$.
    Otherwise, we define the function $\textrm{fold}(s_i \phi_{s_i})$ that transforms shapes $s_i$ and their constraints $\phi_{s_i}$ recursively:
    \begin{itemize}
        \item $\phi_{s_i} = \phi_1 \wedge \phi_2$ then $\{s_i \mapsto s_1 \wedge s_2, s_1 \mapsto \phi_1, s_2 \mapsto \phi_2\}$
        \item $\phi_{s_i} = \neg \phi$ then $\{s_i \mapsto \neg s_1, s_1 \mapsto \phi\}$
        \item $\phi_{s_i} = \geqleft{i}{\phi_E}$ then $\{s_i \mapsto \geqleft{i}{s_1}, s_1 \mapsto \phi_E\}$
        \item $\phi_{s_i} = \geqright{i}{\phi_E}$ then $\{s_i \mapsto \geqright{i}{s_1}, s_1 \mapsto \phi_E\}$
        \item $\phi_{s_i} = \geqslant_i p.\phi_N$ then $\{s_i \mapsto \geqslant_i p.s_1, s_1 \mapsto \phi_N\}$
        \item $\phi_{s_i} = \Rightarrow \phi_N$ then $\{s_i \mapsto \Rightarrow s_1, s_1 \mapsto \phi_N\}$
        \item $\phi_{s_i} = \Leftarrow \phi_N$ then $\{s_i \mapsto \Leftarrow s_1, s_1 \mapsto \phi_N\}$
    \end{itemize}

    From the definition of constraint evaluation, it follows immediately that for each graph $G = (N,E,\rho,\lambda,\sigma)$, faithful assignment $\shapeassignment$, $n \in N$ and $e \in E$ and for each shape $s \in S$ with constraint $\phi_s$ and $s' = \textrm{normalize}(s)$ with constraint $\phi_{s'}$ that $\evaln{s} = \evaln[\shapeassignment']{s'}$ or $\evale{s} = \evale[\shapeassignment']{s'}$, when $\shapeassignment'$ is $\shapeassignment$ modified so that each node or edge assigned a shape $s \in S \cap S'$ is also assigned the respective shapes introduced via $\textrm{fold}$.
    Therefore, when transforming $S$ to $S'$ trough $\textrm{fold}$, we obtain a problem \\$\valid{G}{S'}{s_{N_0}(n_0)}$ that is true if and only if $\valid{G}{S}{s_{N_0}(n_0)}$.

    We next show, that this transformation is possible in polynomial time in $\vert S \vert$.
    When recursively applying fold to set of shapes $S_0,S_1,\ldots$ such that $S_0 = S$ and $S_{k+1} = \textrm{norm}(S_k)$, given the above definition of $\textrm{norm}$, in each step the number of operators per shape is either one, or decreases.
    Therefore, $\textrm{norm}$ must reach a fixed point $S_n$ for each shape in $S_0$.
    Since each application introduces at most 2 new shape names, $n$ in $S_n$ is bounded by $2 \times \sum_{s_i \in S} \textrm{operators}(s_i) = O(\vert S \vert)$, where $\textrm{operators}{s_i}$ is the number of operators in the constraint of shape $s_i$.
    For the same reason, the size of $S_n$ is bounded by $\vert S \vert + 2 \times \sum_{s_i \in S} \textrm{operators}(s_i)$.
    Finally, for each $s_i$ with $0 \leq i \leq n$, following from the definition of $\textrm{normalize}$ above, $\textrm{normalize}(s_i)$ can be computed in linear time $O(\vert S \vert)$.
    Therefore, $S_n$ can be computed in $O(\vert S \vert^2)$.

    \hfill$\square$
\end{proof}

\begin{lemma}
    \label{lem:validation}
    For the validation problem $\valid{G}{S}{s_{N_0}(n_0)}$ with no path expressions other than $l_E$ or constraint with more than one operator in any constraint in $S$ it can be decided in polynomial time in $\vert \Sigma \vert + \vert G \vert + \vert S \vert$ whether the assignment $\Sigma$ is strictly faithful.
\end{lemma}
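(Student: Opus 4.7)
The plan is to iterate over every atom in $\shapeassignment$ and, for each one, evaluate the corresponding constraint at the relevant node or edge under $\shapeassignment$, then compare the resulting truth value against $\shapeassignment$'s value on that atom. Because of the two preparatory normalisations (\Cref{lem:nopath} and \Cref{lem:oneop}), every constraint in $S$ is in a very restricted shape: the only path expressions are bare edge labels $l_E$, and every constraint consists of at most one operator applied to subformulas that themselves are either atomic ($\top$, a shape name $s_N$/$s_E$, a node/edge identifier, a label $l_N$/$l_E$) or a reference to another shape. This means each constraint's evaluation at a single node or edge reduces to a constant number of lookups that can be performed directly from $G$ and $\shapeassignment$.

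The next step is to argue, case by case through the grammars for $\phi_N$ and $\phi_E$, that each of these single-operator evaluations takes time polynomial in $\vert G \vert + \vert \shapeassignment\vert + \vert S\vert$. The atomic cases $\top$, $s_N(n)$, $n' = n$, $l_N \in \lambda(n)$, and their edge analogues need only a membership test or an assignment lookup. Negation and conjunction involve one or two lookups into $\shapeassignment$ because their subformulas have been folded to shape names. The qualified-number restrictions $\geqslant_i\, l_E.s_N$, $\geqleft{i}{s_E}$, and $\geqright{i}{s_E}$ each require enumerating the relevant adjacent edges or endpoints of $n$ in $G$ and counting those whose corresponding atom is assigned $1$ (respectively not assigned $0$), which is linear in the degree of $n$. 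The property-based constraints $\geqslant_i k.f$ and the comparison operators $\odot\,(k_1,k_2)$, $\odot\,(p_1,k_1,p_2,k_2)$ are handled by enumerating $\sigma(n,k)$ (or along a single edge-label path), each set being finite by the definition of $\sigma$. The edge-specific forms $\Rightarrow s_N$ and $\Leftarrow s_N$ use $\rho(e)$ together with a single lookup in $\shapeassignment$.

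Having bounded the cost of evaluating $\evaln{\phi_N}$ and $\evale{\phi_E}$ at a single atom by a polynomial $p(\vert G\vert + \vert \shapeassignment\vert + \vert S\vert)$, the algorithm loops over all $\vert \textrm{atoms}(G,S)\vert \leq (\vert N\vert + \vert E\vert)\cdot \vert S\vert$ atoms, verifies conditions (1) and (2) of \Cref{def:faithful_assignment2} by comparing each computed value to $\shapeassignment$'s value, and finally checks condition (3) by a single lookup of $\shapeassignment(s_{N_0}(n_0))$. The total running time is therefore polynomial in $\vert G\vert + \vert \shapeassignment\vert + \vert S\vert$.

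The only subtle point I expect is the qualified-number restriction case, where one must be careful that the three-valued semantics (in particular the conditions that yield $0$, $1$, or $0.5$) can each be decided by simply counting atoms assigned $1$ and atoms assigned $0$ in $\shapeassignment$ among the relevant neighbours; since this counting is purely local to $G$ and $\shapeassignment$ after normalisation, no recursive unfolding of path expressions or nested constraints is needed, and the polynomial bound goes through cleanly.
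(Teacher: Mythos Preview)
Your proposal is correct and follows essentially the same approach as the paper: both argue, by a case analysis over the restricted one-operator constraints, that each atom's evaluation can be computed in polynomial time from $G$ and $\shapeassignment$, and then sum over all atoms to verify strict faithfulness. Your treatment of the qualified-number restrictions is in fact slightly more careful than the paper's, since you explicitly note that all three truth values ($0$, $0.5$, $1$) are decided by counting neighbours assigned $1$ and $0$ in $\shapeassignment$.
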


\begin{proof}

    Let $\valid{G}{S}{s_{N_0}(n_0)}$ be a validation problem, where each constraint in $S$ has one or less operators and no path expressions other than $l_E$.
    Let $\Sigma$ be a faithful assignment.
    In order to show that $\Sigma$ is indeed faithful, it suffices to verify that $\Sigma(s(n)) = \evaln{\phi_s}$ for each atom $s(n) \in \textrm{atoms}_N(G,S)$ and $\Sigma(s(e)) = \evale{\phi_s}$ for each atom $s(e) \in \textrm{atoms}_E(G,S)$ (where $\phi_s$ is the constraint for shape $s$).
    We now show that this can indeed be verified in polynomial time in $\vert \Sigma \vert + \vert G \vert + \vert \phi_s \vert$, for which, since $\vert \phi_s \vert < \vert S \vert$ follows that $\vert \Sigma \vert + \vert G \vert + \vert \phi_s \vert$.

    For verification of $\Sigma(s(n)) = \evaln{\phi_s}$ we have the following cases. We omit the trivial cases where $\top$, $\neg \top$, $n$ or $\neg n$ could be used (e.g., in $\phi_s = \geqslant_i e.\top$).
    \begin{itemize}
        \item $\phi_s = \top$, then verify $\Sigma(s(n)) = 1$ in $O(|\phi_s\vert + \vert \Sigma \vert)$
        \item $\phi_s = \neg\top$, then verify $\Sigma(s(n)) = 0$ in $O(|\phi_s\vert + \vert \Sigma \vert)$
        \item $\phi_s = s'$, then verify $\Sigma(s(n)) = \Sigma(s'(n))$ in $O(|\phi_s\vert + \vert \Sigma \vert)$
        \item $\phi_s = \neg s'$, then verify $\Sigma(s(n)) = \Sigma(s'(n))$ in $O(|\phi_s\vert + \vert \Sigma \vert)$
        \item $\phi_s = s_1 \wedge s_2$, then verify $\Sigma(s(n)) = \textrm{min}\{\Sigma(s_1(n)), \Sigma(s_2(n))\}$ in $O(|\phi_s\vert + \vert \Sigma \vert)$

        \item $\phi_s = n'$, then verify $\Sigma(s(n)) = 1$ in $O(|\phi_s\vert + \vert \Sigma \vert)$
        \item $\phi_s = \neg n'$, then verify $\Sigma(s(n)) = 1$ in $O(|\phi_s\vert + \vert \Sigma \vert)$

        \item $\phi_s = \odot (l_{e_1},l_{e_2})$, then check if there are $e_1,e_2 \in E$ (where $e_1 \neg e_2$) with $\lambda(e_1) = e_{e_1}$ and $\lambda(e_2) = e_{e_2}$ and $\rho(e_1) = (n, n_1)$ and $\rho(e_2) = (n, n_2)$ and verify $\Sigma(s(n)) = 1$. Otherwise verify $\Sigma(s(n)) = 0$.
        The whole check is in $O(\vert \phi_s \vert + \vert G^2 \vert + \vert \Sigma \vert)$.

        \item $\phi_s = \odot (k_1,k_2)$, then verify $\Sigma(s(n)) = 1$ in $O(\vert \phi_s \vert + \vert \Sigma \vert)$
        \item The case $\phi_s = \odot (e_1, k_1, e_2, k_2)$ is a combination of the two previous cases

        \item $\phi_s = \geqslant_i l_e.s' $, then let $N_e = \{n' \mid e \in E, l_e \in \lambda(e), \rho(e) = (n, n') \}$, which can be computed in $O(\vert G \vert)$.
        Verify that $\vert N_e \vert < n$, in $O(\vert G \vert + \vert \phi_s \vert)$.
        If so, verify $\Sigma(s(n)) = 0$ in $O(\vert \phi_s \vert + \vert \Sigma \vert)$.
        Otherwise, check if there are more than $i$ $n' \in N_e$ such that $\Sigma(s(n')) = 1$ in $O(\vert \phi_s \vert + \vert \Sigma \vert)$.
        If there are fewer, verify $\Sigma(s(v)) = 0$ in in $O(\vert \phi_s \vert + \vert \Sigma \vert)$.

        \item The cases $\phi_s = \geqleft{i}{s}$ and $\phi_s = \geqright{i}{s}$ are similar to the previous one, excluding the second step involving $s'$.

        \item $\phi_s = \geqslant_i k.f$, then verify $\Sigma(s(n)) = 1$ in $O(|\phi_s\vert + \vert \Sigma \vert)$

    \end{itemize}

    \noindent
    For verification of $\Sigma(s(e)) = \evale{\phi_s}$ we have the following cases (again omitting trivial variants with $\top$, $n$ or negation thereof):
    \begin{itemize}
        \item $\phi_s = \Rightarrow s$, then verify $\Sigma(s(n')) = 1$ in $O(|\phi_s\vert + \vert \Sigma \vert)$ for $\rho(e) = (n,n')$.
        \item $\phi_s = \Leftarrow s$, then verify $\Sigma(s(n')) = 1$ in $O(|\phi_s\vert + \vert \Sigma \vert)$ for $\rho(e) = (n,n')$.
        \item We omit remaining cases, which are equivalent to node constraints.
    \end{itemize}

    \hfill$\square$
\end{proof}

\noindent
Now we can prove \textbf{\Cref{theorem:innp}}.

\begin{proof}
    Let $G_0$ be a property graph and $S_0$ a set of shapes such that $s_{N_O}$ is the only shape that has a target $n_0$ and the only target query different from $\bot$.

    \begin{enumerate}
    \item Applying \Cref{lem:nopath} we can, in polynomial time, construct a graph $G_1$ and set of shapes $S_1$ such that $\valid{G_1}{S_1}{s_{N_0}(n_0)}$ iff $\valid{G_0}{S_0}{s_{N_0}(n_0)}$ and the constraints in $S_1$ do not contain any property paths.

    \item Applying \Cref{lem:oneop} we can then, in polynomial time, construct a graph $G_2$ and set of shapes $S_2$ such that $\valid{G_2}{S_2}{s_{N_0}(n_0)}$ iff $\valid{G_1}{S_1}{s_{N_0}(n_0)}$ and each constraint in $S_2$ contains one or less operators and no property path expressions.

    \item Therefore, in order to show that $\valid{G_0}{S_0}{s_{N_0}(n_0)}$ holds, we can show that $\valid{G_2}{S_2}{s_{N_0}(n_0)}$.
    Assumption: $\valid{G_2}{S_2}{s_{N_0}(n_0)}$ is valid. 
    Then there exists a strictly faithful assignment $\Sigma$.
    Given that $\vert\textrm{atoms}(G,S)\vert=\vert N \vert\times\vert E \vert\times\vert S \vert \leqslant \vert G \vert \times \vert S \vert$, $\Sigma$ can be encoded as a string with size polynomial in the encoding of $\valid{G_2}{S_2}{s_{N_0}(n_0)}$.

    \item Given an oracle which, for a given $\valid{G_2}{S_2}{s_{N_0}(n_0)}$, returns a $\Sigma$, it can be verified via \Cref{lem:validation} in time polynomial in $\vert \Sigma \vert + \vert G_2 \vert + \vert S_2 \vert$ that $\Sigma$ is indeed a strictly faithful assignment.
    Therefore it directly follows that \validation is in NP.
    \end{enumerate}

    \hfill$\square$
\end{proof}}
\reportonly{\section{Demo: \pshacl Tool Suite}
\label{sec:appb}

This appendix gives a short overview of high-level aspects of the ASP-based implementation of \pshacl.
More concrete and technical usage instructions can be found as part of the documentation in the GitHub repository\footnote{\therepo}.

\subsection{High-Level Overview}

Executing validation with a graph instance and with a set of shapes either outputs the result \texttt{UNSATISFIABLE} or \texttt{SATISFIABLE} plus one faithful assignment (\ie a mapping for all nodes/edges and shapes, as well as the values \texttt{yes}, \texttt{no} or \texttt{maybe}) such that each target node is assigned \texttt{yes} and the assigned value corresponds to the result of evaluating the validation function for the respective node or edge. 
It is also possible to find all faithful assignments, though this may take a long time to compute for larger graphs.

To this end, the implementation (see \Cref{fig:coreasp} for an excerpt) relies essentially on the \texttt{assignE/3} as well as \texttt{assignN/3} predicates for finding faithful assignments such that each node/edge and shape is either assigned \texttt{yes}, \texttt{no} or \texttt{maybe} (lines 1-4) and predicates \texttt{satisfiesE/3} and \texttt{satisfiesN/3}, which encode the validation semantics of \pshacl (lines 6-7).
We also ensure that target nodes are assigned \texttt{yes} (lines 9-10).

\begin{figure}
    \begin{lstlisting}[style=asp,numbers=left]
assignN(N,S,yes) | assignN(N,S,no) | assignN(N,S,maybe) 
    :- node(N), nodeshape(S).
assignE(E,S,yes) | assignE(E,S,no) | assignE(E,S,maybe) 
    :- edge(E), edgeshape(S).

assignN(N,S,R) :- nodeshape(S,C,_), satisfiesN(N,C,R).
assignE(E,S,R) :- edgeshape(S,C,_), satisfiesE(E,C,R).

:- targetN(N,S), not assignN(N,S,yes).
:- targetE(E,S), not assignE(E,S,yes).
    \end{lstlisting}
    \caption{Outlining the core approach for finding faithful assignments.}
    \label{fig:coreasp}
\end{figure}

\subsection{Encoding of Property Graphs}

There are two ways of defining property graphs for validation.
Internally, property graphs are encoded in ASP.
This coding can be used directly to define property graphs for validation.
\Cref{fig:propertygraphencoding} shows an example, involving the three relevant predicates \texttt{edge/3}, \texttt{label/2} and \texttt{property/3} corresponding to $\rho$, $\lambda$ and $\sigma$ in the formalization.
Note, that $N$ and $E$ (nodes and edges) need not be explicitly defined.
The encoding has some minor restrictions: In particular, properties and labels must start with lower case letters.
This is due to the fact that in ASP uppercase letters indicate variables.

\begin{figure}
    \begin{lstlisting}[style=asp,numbers=left]
// Define all edges in the graph.
edge(100, 200, 101).
edge(100, 201, 102).
edge(102, 202, 100).
edge(102, 203, 101).

// Define labels for nodes and edges.
label(100, employee).
label(100, person).
label(101, company).
label(200, worksFor).
label(201, colleagueOf).
// ...

// Define properties for nodes and edges.
property(100, name, string("Tim Canterbury")).
property(100, age, integer(30)).
// ...
    \end{lstlisting}
    \caption{Excerpt of the encoding of the property graph $\examplegraph$ from \Cref{fig:intro} in ASP.}
    \label{fig:propertygraphencoding}
\end{figure}

In addition to property graphs directly specified using the formal ASP encoding, real-world property graphs can be validated.
For this purpose, the tool suite provides functionality for exporting and converting Neo4j instances to the ASP encoding.
The conversion relies on the Cypher JSON export format, but can be readily replicated for other graph stores.
Due to the previously mentioned restriction, our conversion tool converts all identifiers (labels and properties) to lower case.
More information on how to export Neo4j instances can be found in the documentation.

\subsection{Encoding of Shapes}

The ASP encoding of shapes is, for the most part, straightforward.
In addition to the \texttt{nodeshape/3} (and \texttt{edgeshape/3}) predicates, each component constraint and path must be explicitly listed, so that rules in the semantics are safe and can be grounded.
When using the concrete syntax (see next section), this is taken care of automatically.
\Cref{fig:encodedshapes} shows an example ASP encoding of the shape $s_1$ introduced in \Cref{ex:colleague}.
The concrete syntax of this shape is discussed in the following section.

\begin{figure}
    \begin{lstlisting}[style=asp,numbers=left]
constraint(greaterEq(label(colleagueOf),label(person),1)).
constraint(label(person)).
path(label(colleagueOf)).

nodeshape(s1,                                            // shape name
          greaterEq(label(colleagueOf),label(person),1), // constraint
          label(employee)).                              // target
    \end{lstlisting}
    \caption{Example for the ASP encoding of shapes, using the shape $s_1$ from \Cref{ex:colleague}.}
    \label{fig:encodedshapes}
\end{figure}

\subsection{Concrete Syntax of Shapes}

The tool suite also provides a concrete syntax for shapes, which is the strongly recommended way of defining shapes, since parser errors, for example, are an indication of ill-defined shapes, that may go unnoticed when using the ASP encoding directly.
\Cref{fig:concreteshapes} shows the example shape from \Cref{fig:encodedshapes} using the concrete syntax.
More details (as well as the grammar of the concrete \pshacl syntax) are available on GitHub.

\begin{figure}
    \centering 
    \begin{lstlisting}[style=progs,numbers=left]
NODE s1 [:employee] {
    >= 1 :colleagueOf . :person
};
    \end{lstlisting}
    \caption{Example for the concrete syntax of shapes, using the shape $s_1$ from \Cref{ex:colleague}.}
    \label{fig:concreteshapes}
\end{figure}

\subsection{Example Encoding of Validation Semantics}

Finally, we demonstrate how \texttt{satisfiesN/3} encodes the validation semantics, using $\wedge$ for node constraints as an example.
\Cref{fig:aspand} shows the direct mapping of validation semantics, in this case the recursive validation of both components \texttt{C1} and \texttt{C2} (line 2).
We first ensure that \texttt{N} and \texttt{and(C1,C2)} can be grounded (line 1).
Finally, just as in the formal definition in \Cref{fig:nconstrainteval}, we find the minimum for \texttt{yes}, \texttt{no} or \texttt{maybe} in \texttt{R} (line 3).

\begin{figure}
    \begin{lstlisting}[style=asp, numbers=left]
satisfiesN(N,and(C1,C2),R) :- node(N), constraint(and(C1,C2)),
                              satisfiesN(N,C1,R1), satisfiesN(N,C2,R2),
                              min(R1,R2,R).
    \end{lstlisting}
    \caption{Encoding of constraint $\wedge$ for node constraints.}
    \label{fig:aspand}
\end{figure}}

\end{document}